\newtheorem{fact}[theorem]{Fact}
\author{Pratik Ghosal \and Adam Kunysz \and Katarzyna Paluch}
\date{}
\institute{University of Wroc\l aw, Wroc\l aw, Poland}
\title{The Dynamics of Rank-Maximal and Popular Matchings\protect\footnote{{
{This work was partially funded by Polish National Science Center
    grant UMO-2018/29/B/ST6/02633}}}}
\begin{document}
\sloppy
\maketitle
\begin{abstract}
Given a bipartite graph, where the two sets of vertices are applicants and posts and ranks on the edges represent preferences of applicants over posts, a {\em rank-maximal} matching is one in which the maximum number of applicants is matched to their rank one posts and subject to this condition, the maximum number of applicants is matched to their rank two posts, and so on. 
We study the dynamic version of the problem in which a new applicant or post may be added to the graph and we would like to maintain a rank-maximal matching. We show that after the arrival of one vertex, we are always able to update the existing rank-maximal matching in $\mathcal{O}(\min(c'n ,n^2) + m)$ time, where  $n$ denotes the number of applicants, $m$ the number of edges and $c'$ the maximum rank of an edge in an optimal solution.  Additionally, we update the matching using a minimal  number of changes (replacements). All cases of a deletion of a vertex/edge and an addition of an edge   can be reduced to the problem of handling the addition of a vertex.
As a by-product, we also get  an  analogous $\mathcal{O}(m)$ result for  the dynamic version of the (one-sided) popular matching problem.  

Our results are based on the novel use of the properties of the Edmonds-Gallai decomposition. The presented ideas may find applications
in other (dynamic) matching problems.

\end{abstract}
\begin{keywords}
rank-maximal matching, dynamic matching, popular matching, Edmonds-Gallai decomposition
\end{keywords}

\section{Introduction}

We consider the dynamic version of the rank-maximal matching problem. In the rank-maximal matching problem, we are given a bipartite graph
$G = (\mathcal{A} \cup \mathcal{P}, \mathcal{E})$, where $\mathcal{A}$ is a set of applicants, $\mathcal{P}$  a set of posts and edges have ranks. An edge $(a,p)$ has rank $i$ if
the post $p$ is one of the applicant $a$'s $i$th choices. A matching  of the graph $G$ is said to be {\em rank-maximal} if it matches the maximum number of applicants  to their rank one posts and subject to this condition, it matches the maximum number of applicants  to their rank two posts, and so on. A rank-maximal matching can be computed in $\mathcal{O}(\min(c \sqrt{n},n) m)$ time, where $n$ denotes the number of applicants, $m$ the number of edges and $c$ the maximum rank of an edge in an optimal solution \cite{IrvingKMMP06}. The algorithm from \cite{IrvingKMMP06} consists in successive computations of a maximum matching $M_i$ of a so-called reduced graph $G'_i$. The reduced graph $G'_i$
contains an appropriately trimmed set of edges of rank at most $i$ and the computation of the maximum matching $M_i$ is always conducted by
extending the previously found maximum matching $M_{i-1}$ of the reduced graph $G'_{i-1}$.
Rank-maximal matchings have applications in assigning papers to referees \cite{GargKKMM10}, projects to students etc.

In the dynamic variant of the problem a new vertex may be added to the graph and we would like to maintain a rank-maximal matching.
When the new vertex $v$ is added to the graph $G$ we assume that the graph $G$ itself does not change. In particular, if a new post $p$ arrives,
the applicants of $G$ cannot change their preferences over the posts that are already included in $G$. Let us call the graph $G$ extended by $v$ and the edges incident to $v$ as the graph $H$. In order to have a rank-maximal matching of $H$, we would like to be able to transform a rank-maximal matching $M$ of $G$ into a rank-maximal matching $N$ of $H$, making only the smallest needed number of changes. In some cases
a rank-maximal matching of $G$ is also  rank-maximal in $H$. We design an algorithm that  updates $M$ by an application of {\em only one} alternating path  $P$, i.e., $M \oplus P= (M \setminus P) \cup (P \setminus M)$  is a rank-maximal matching of $H$. To be able to compute $P$ efficiently, we need access to the reduced graphs $G'_1, G'_2, \ldots, G'_c$ of $G$
(the notion is defined in \cite{IrvingKMMP06} and also recalled in Section \ref{not}) and their Edmonds-Gallai decompositions. The reduced graphs and their decompositions can be stored in $\mathcal{O}(n^2+m)$ space. We show that we can compute a required alternating path $P$ as well as update the reduced graphs $H'_1, H'_2, \ldots, H'_{c'}$ of $H$ and their Edmonds-Gallai decompositions in $\mathcal{O}(\min(c'n ,n^2) + m)$ time ($c'$ is defined analogously). 
The time bound can be considered optimal under the circumstances, as improving it would imply a better  running time for the rank-maximal matching problem.

The result may seem rather surprising in the sense that we are able to compress $r$ phases, each of which requires the computation of a matching and the update of the Edmonds-Gallai decomposition, into one phase with the running time of 
$\mathcal{O}(\min(c'n ,n^2) + m)$. For comparison, let us note that it is much easier to update the matching gradually  - separately in each of the graphs $H_i$ that consists of edges of rank at most $i$. 
In such an approach, however, it is required to compute and apply $r$ alternating paths. Each such computation and update of the reduced graph
$H'_i$ can be carried out in $\mathcal{O}(n+m)$ time and thus the overall running time is $\mathcal{O}(r(n+m))$.  This is how the problem is dealt with in a recent paper by  Nimbhorkar and Rameshwar \cite{nimbhorkar2018dynamic}. We, instead, abstain from updating the matching until the last phase when we have collected all the necessary data in the form of a carefully built alternating subgraph $T$. %A rank-maximal matching of a new graph can be obtained by applying an appropriate alternating path to the existing rank-maximal matching. 
 This subgraph is rooted at the new vertex $v$ and contains all possible
alternating paths, whose application results in a rank-maximal matching of $H$.
To be able to  efficiently build this subgraph and hence update a rank-maximal matching, we identify new properties of the Edmonds-Gallai decomposition, that are of independent interest and are potentially applicable to other (dynamic) matching problems. We want to observe that even a simple checking whether the matching  needs to be updated at all is not easy to carry out faster than in $\mathcal{O}(cm)$ time. In the paper we show how it can be done in $\mathcal{O}(m)$ time. For the case when the matching requires to be 
altered, one of the main new ideas that allows us to do so in a good time bound is that of recognizing the alternating paths that will finally belong to the alternating subgraph $T$ and ignoring those that will not. 

Observe that our algorithm is significantly faster than the one in \cite{nimbhorkar2018dynamic} - the improvement is always of the order of $\Omega(m/n)$ and may be even   $\Omega(m)$ for the case when $r$ is of the order of $\Omega(m)$. (In standard settings $r$ is   $\mathcal{O}(n)$, however,  each edge in the graph may be assigned a different rank (because we give priorities to certain applicants) and then the number of  distinct ranks may be $\Omega(m)$.)   Additionally, to update the matching we use a  minimal  number of  changes (replacements). To achieve this, from all alternating paths $P'$ such that $M \oplus P'$ is a rank-maximal matching of the new graph, we select  the shortest one. 

%The construction of our algorithm requires a good grasp of the properties of the Edmonds-Gallai decomposition and a knowledge of the structure of rank-maximal matchings. Similarly as in 
%\cite{IrvingKMMP06} we gradually trim the edges of the original graph $H$. However, in the dynamic algorithm we use this idea in a substantially
%different and more involved way. The manner might be described as "lazy" and "compressed" because in the solution to the offline version \cite{IrvingKMMP06} in each phase we obtain a precise reduced graph $G'_i$ while in the dynamic algorithm we only have approximations of reduced graphs $H'_i$. Also sometimes we do not do anything for several phases in order to change several reduced graphs in one step in a backwards manner from the latest to the earliest. ....

We present an algorithm for the version, in which a new applicant is added to the graph. This solution applies to the situation 
when a new post  arrives - note that ranks are assigned to edges and, from the point of view of the algorithm there is no real difference between applicants and posts (there is one, however, in their interpretation). We show, that all cases of: a deletion of a vertex from the graph, an addition 
or a deletion of a new edge or even a change of the rank of a given edge can be reduced to the problem of handling the addition of a vertex.

The \emph{popular matching} problem in the one-sided version is defined as follows. The input is the same as in the rank-maximal matching problem - we are given a bipartite graph $G$, in which the vertices of one side of the graph express their preferences over the vertices of the other side. The goal is to find a {\em popular} matching
in $G$, if it exists. A matching $M$ is said to be popular if there exists no other matching $M'$ such that $M'$  is {\em more popular} than $M$. A matching $M'$ is more popular than $M$ if the number of applicants preferring $M'$ to $M$ is greater than the number of applicants preferring $M$ to $M'$ and an applicant $a$ prefers $M'$ to $M$ if (i) he is matched in $M'$ and unmatched in $M$ or (ii) he prefers the post $M'(a)$ to $M(a)$. Not every instance of the problem admits  a popular matching. Nevertheless, Abraham et al. \cite{AbrahamIKM07} gave an $\mathcal{O}(\sqrt{n}m)$ time algorithm that computes a popular matching, if it exists.
The algorithm is in a certain sense similar to the one computing a rank-maximal matching. It consists of two phases that are the same as in the algorithm for rank-maximal  matchings, but the edges participating in the second phase are defined on the basis of so-called first and second posts. To put it differently, every popular matching of $G$ is an applicant-complete rank-maximal matching of a subgraph of $G$ that only contains  edges connecting each applicant to its first and second posts. To obtain a solution for the dynamic version of the popular 
matching problem, we can thus directly use the algorithm for the dynamic version of the rank-maximal matching problem. Since the number of phases is two,  we are able to update a popular matching in $\mathcal{O}(m)$ time after the arrival/deletion of a new vertex/edge.  Nimbhorkar and Rameshwar \cite{nimbhorkar2018dynamic} are also able to update a popular matching in $\mathcal{O}(m)$ time, however, they have no control over the number of applied changes.

The algorithm for updating a rank-maximal matching can be also used for updating a bounded unpopularity matching in the same time bound of $\mathcal{O}(\min(c'n ,n^2) + m)$ \cite{HuangKMN11}.

%There are many possible applications of the dynamic algorithm. Let us assume that there is a company that recruits new employees each year and each recruited applicant has a preference list on the posts. A central authority matches each applicant with a post. There is already an existing matching from the previous year. It is not efficient to calculate a matching from  scratch every year. Instead, it is better to update the existing matching.  This is one of the motivations behind the dynamic matching problem. %The rank maximal matching and the popular matching are two well known optimality criteria when we have a bipartite graph with one sided preference list.

{\bf Previous work} A rank-maximal matching can be found via a relatively straightforward reduction to a maximum weight matching. The running time 
of the resulting algorithm is $\mathcal{O}(r^2 \sqrt{n}m\log n)$, where $r$ denotes the maximal rank of an edge, if we use the Gabow-Tarjan \cite{GT89} algorithm, or $\mathcal{O}(rn(m + n \log n))$ for the Fredman-Tarjan algorithm \cite{FT87}. The first algorithm for rank-maximal matchings was given by Irving in \cite{Irvgreedy} for the version without ties and with the 
running time of $\mathcal{O}(d^2 n^3)$, where $d$ denotes the maximum degree of an applicant in the graph (thus $d \leq r$). The already mentioned \cite{IrvingKMMP06} gives
a combinatorial algorithm that runs in $\mathcal{O}(\min(n, c \sqrt n)m)$ time. The capacitated and weighted versions were considered, respectively,
in \cite{Paluch13} and \cite{KavithaS06}. A switching graph characterisation of the set of all rank-maximal matchings is described in \cite{GhoshalNN14}.  Independently of our work, in a recent paper \cite{nimbhorkar2018dynamic} Nimbhorkar and Rameshwar also study the dynamic version of the rank-maximal matching problem and develop an $\mathcal{O}(r(n+m))$ algorithm for updating a rank-maximal matching after the addition or deletion of a vertex or edge.

{\bf Related Work}
Matchings under preferences in the dynamic setting have been studied under different notions of optimality. In \cite{DBLP:conf/latin/McCutchen08} McCutchen introduced the notion of an unpopularity factor and showed that it is NP-hard to compute a least unpopular matching in one-sided instances. Bhattacharya et. al. \cite{BhattacharyaHHK15} gave an algorithm to maintain matchings with an unpopularity factor of $(\Delta + k)$ by making an amortized number of $\mathcal{O}(\Delta+\Delta^2/k)$ changes per round, for any $k > 0$ where $\Delta$ denotes the maximum degree of any applicant in any round. Note that this is the number of changes made to the matching and not the update time, which is much higher and requires a series of computations of a maximum weight matching.

In \cite{DBLP:journals/siamdm/AbrahamK10} Abraham and Kavitha describe the notion of a so-called \emph{voting path}. A voting path is a sequence of matchings which starts from an arbitrary matching, and ends at a popular matching and each matching in the sequence is more popular than the previous one. The authors showed that in the one-sided setting with ties there always exists a voting path of length at most two. They also show how to compute such paths in linear time,  given a popular matching in the graph, which allows them to maintain a popular matching under a sequence of deletions and additions of vertices to the graph, however, in $(\mathcal{O}(\sqrt{n}m)$ time per each update.

Pareto optimality is another well-known criterion. In \cite{AbrahamCMM05} authors gave an $\mathcal{O}(\sqrt{n}m)$ time algorithm for computing Pareto optimal matchings. In \cite{FleischerW} Fleischer and Wang studied Pareto optimal matchings in the dynamic setting. The authors gave a linear time algorithm to maintain a maximum size Pareto matching under a sequence of deletions and additions of vertices.

{\bf Organization} Section \ref{not} recalls the definitions and the rank-maximal matching algorithm. Section \ref{simple} contains a description of the simplified variant of the problem, in which we only want to check if the update is necessary, i.e., if the rank-maximal matching of $G$ is also a rank-maximal matching of the new graph $H$. In Section \ref{aug} we describe the ideas behind the alternating subgraph $T$
that contains all paths, whose application to the current matching yields an updated rank-maximal matching. In Section \ref{alg} we present the 
algorithm for updating the rank-maximal matching and give the proof of its correctness. Section $\ref{example}$ contains two examples illustrating the algorithm presented in Section \ref{alg}. In Section \ref{updates} we present an algorithm that efficiently updates reduced graphs  after applying Algorithm \ref{main}. Finally in Section \ref{popular} we present an algorithm for the dynamic popular matching problem.
%Due to space constraints some of the proofs or examples are in Appendix.
\section{Preliminaries} \label{not}

Let $G = (\mathcal{A} \cup \mathcal{P}, \mathcal{E})$ be a bipartite graph and let $M$ be a maximum matching of $G$. 
We say that a path is {\em $M$-alternating} if its edges belong alternately to $M$ and $\mathcal{E}\setminus M$. We say that a vertex $v$ is
{\em free} or {\em unmatched} in $M$ if no edge of $M$ is incident to $v$. An $M$-alternating path is said to be {\em $M$-augmenting} (or {\em augmenting} if the matching is clear from the context) if it starts and ends at an unmatched vertex.

By $V(G)$ and $\mathcal{E}(G)$ we denote, respectively, the set of vertices of $G$ and the set of edges.

Given a maximum matching $M$, we can partition the vertex set of $G$ into three disjoint sets $E$, $O$ and $U$. Vertices in $E$, $O$ and $U$ are called \emph{even}, \emph{odd} and \emph{unreachable} respectively and are defined as follows. A vertex $v \in V(G)$ is \emph{even} (resp. \emph{odd}) if there is an even (resp. odd) length alternating path in $G$ with respect to $M$ from an unmatched vertex to $v$. A vertex $v \in V(G)$ is unreachable if there is no alternating path in $G$ with respect to $M$ from an unmatched vertex to $v$. For vertex sets $A$ and $B$, we call an edge connecting a vertex in $A$ with a vertex in $B$ an $AB$ edge. 

The following lemma is well known in matching theory.

\begin{lemma} \textbf{Edmonds-Gallai decomposition (EG-decomposition)} \cite{sch}, \cite{IrvingKMMP06}
\label{EG}
Let $M$ be a maximum matching in $G$ and let $E$, $O$ and $U$ be defined as above.
\begin{enumerate}
	\item \label{EG1} The sets $E$, $O$, $U$ are pairwise disjoint.
	\item \label{EG2} Let $N$ be any maximum matching in $G$. \begin{enumerate}
		\item \label{EG2A} $N$ defines the same sets $E$, $O$ and $U$.
		\item \label{EG2B} $N$ contains only $UU$ and $OE$ edges.
		\item \label{EG2C} Every vertex in $O$ and every vertex in $U$ is matched by $N$.
		\item \label{EG2D} $|N| = |O| + |U|/2$.
	\end{enumerate}
	\item \label{EG3}There is no $EU$ and no $EE$ edge in $G$.
\end{enumerate} 

\end{lemma}

Throughout the paper we consider many graphs at once, thus to avoid confusion, for a given graph $G$ we denote the sets of even, odd and unreachable vertices as $E(G)$, $O(G)$ and $U(G)$ respectively.

\subsection{Rank-Maximal Matchings}
Next we review an  algorithm by Irving et al. \cite{IrvingKMMP06} for computing  a rank-maximal matching. Let $G = (\mathcal{A} \cup \mathcal{P}, \mathcal{E})$
be an instance of the rank-maximal matching problem. Every edge $e=(a,p)$ has a rank reflecting its position in the preference list of applicant $a$.  $\mathcal{E}$ is the union of disjoint sets $\mathcal{E}_i$ , i.e.,  $\mathcal{E} = \mathcal{E}_1 \cup \mathcal{E}_2 \cup \mathcal{E}_3 ... \cup \mathcal{E}_r$, where $\mathcal{E}_i$ denotes the set of edges of rank $i$.

\begin{definition}\cite{IrvingKMMP06}
The signature of a matching $M$ is defined as an $r$-tuple $\rho(M) = (x_1,..., x_r)$ where, for each $1 \leq i \leq r$, $x_i$ is the number of
applicants who are matched to their $i$-th rank post in $M$.
\end{definition}

Let $M$ and $M'$ be two matchings of $G$, with the signatures $\rho(M) = (x_1,..., x_r)$ and
$\rho(M') = (y_1,..., y_r)$. We say $M \succ M'$ if there exists $1 \leq k \leq r$ such that $x_k > y_k$ and $x_i = y_i$ for each $i < k$ with $i \in \mathbb{N}$. 

\begin{definition}
A matching $M$ of a graph $G$ is called  rank-maximal  if and only if $M$ has the best signature under the ordering $\succ$ defined above.
\end{definition}

We give a brief description of the algorithm of Irving et al. \cite{IrvingKMMP06} for computing a rank-maximal matching, whose pseudocode (Algorithm \ref{alg1}) is given below.  Let us denote $G_i = (\mathcal{A} \cup \mathcal{P}, \mathcal{E}_1 \cup \mathcal{E}_2 \cup ...\cup \mathcal{E}_i)$ as a subgraph of $G$ that only contains edges of rank smaller or equal to $i$. $G'_i$ is called the reduced graph of $G_i$ for $1 \leq i \leq r$. The algorithm runs in phases. The algorithm starts with $G'_1 = G_1$ and a maximum matching $M_1$ of $G_1$.  In the first phase, the set of vertices is partitioned into $E(G'_{1})$, $O(G'_{1})$ and $U(G'_{1})$. The edges between $O(G'_{1})$ and $O(G'_{1}) \cup U(G'_{1})$ are deleted. Since any vertex in $ O(G'_{1}) \cup U(G'_{1})$ has to be matched in $G_1$ in every rank-maximal matching, the edges of rank greater than $1$ incident to such vertices are deleted from the graph $G$.  Next we add the edges of rank $2$ and call the resulting graph $G'_2$. The graph $G'_2$ may contain some $M_1$-augmenting paths. We determine the maximum matching $M_2$ in $G'_2$  by augmenting $M_1$.
In the $i$-th phase,  the vertices are partitioned into three disjoint sets $E(G'_i)$, $O(G'_i)$ and $U(G'_i)$. We delete every edge between $O(G'_{i})$ and $O(G'_{i}) \cup U(G'_{i})$. Also, we delete every edge of rank greater than $i$ incident to  vertices in $O(G'_{i})\cup U(G'_{i})$. Next we add the edges of rank $(i+1)$ and call the resulting graph $G'_{i+1}$. We determine the maximum matching $M_{i+1}$ in $G'_{i+1}$ by  augmenting $M_i$.

The pseudocode of Irving et al.'s algorithm \cite{IrvingKMMP06} is denoted as Algorithm \ref{alg1}.

\begin{theorem} \cite{IrvingKMMP06}
Algorithm \ref{alg1} computes a rank-maximal matching in $\mathcal{O}(\min\{c\sqrt{n}, n \}m)$ time, where $c \leq r$ denotes a maximal rank in the optimal solution.
\end{theorem}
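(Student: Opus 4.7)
The plan is to prove correctness and running time separately, relying throughout on Lemma \ref{EG}. For correctness I would induct on the rank $i$, maintaining the invariant that after phase $i$ the algorithm holds a maximum matching $M_i$ of a reduced graph $G_i'$, and that $G_i'$ contains every rank-$\leq i$ edge that appears in some rank-maximal matching of $G$ (and no others). The base case $i=1$ is just an ordinary maximum bipartite matching computation on the rank-one subgraph.

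For the inductive step I would invoke Lemma \ref{EG}. The sets $E(G_i'), O(G_i'), U(G_i')$ depend only on $G_i'$, and every maximum matching of $G_i'$ saturates every vertex of $O(G_i') \cup U(G_i')$ using only $OE$ and $UU$ edges. Now take any rank-maximal matching $N$ of $G$: because its signature is lexicographically optimal on the first $i$ coordinates, $N$ restricted to rank-$\leq i$ edges is a maximum matching of $G_i$, hence by the inductive hypothesis also of $G_i'$. Consequently $N$ (i) matches every vertex of $O(G_i') \cup U(G_i')$ by an edge of rank $\leq i$, and (ii) uses no $OO$ or $OU$ edge. The algorithm may therefore safely delete every rank-$>i$ edge incident to $O(G_i') \cup U(G_i')$ together with every $OO,OU$ edge of rank $\leq i$; adding the surviving rank-$(i+1)$ edges and augmenting $M_i$ to a maximum matching of the resulting graph $G_{i+1}'$ restores the invariant at $i+1$. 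Terminating at rank $c$ is correct since beyond $c$ the optimal signature contributes zero.

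For the running time I would treat the two terms of the minimum independently. Running a Hopcroft--Karp style augmentation inside each of the $c$ phases costs $O(\sqrt{n}\,m)$ per phase and $O(c\sqrt{n}\,m)$ in total. The $O(nm)$ bound follows from the global observation that every augmenting path found during the entire execution matches a previously unmatched applicant, while no applicant is ever unmatched once matched; hence the total number of augmentations across all phases is at most $n$, and each augmentation together with its EG-decomposition update takes $O(m)$ via BFS.

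The main obstacle, and the heart of the argument, is the correctness of the pruning: one must show that the discarded edges are useless for \emph{every} rank-maximal matching of $G$, not merely for the one being constructed. This is precisely what the invariance of the EG decomposition across all maximum matchings of $G_i'$ (part 2(a) of Lemma \ref{EG}) is tailored to deliver, and it is what lets the inductive hypothesis survive the transition from $G_i$ to the much smaller $G_i'$.
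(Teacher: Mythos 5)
The paper does not actually prove this theorem: it is imported from Irving et al.\ \cite{IrvingKMMP06}, and the paper only records the four invariants of Algorithm \ref{alg1} on which that proof rests. Your sketch follows the same invariant-based route, and your running-time analysis (Hopcroft--Karp inside each of the $c$ phases for the $c\sqrt{n}\,m$ term; at most $n$ augmentations over the whole execution, each costing $O(m)$, for the $nm$ term) is the standard one. The problem is that the pivot of your inductive step is a false statement.

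You claim that for a rank-maximal matching $N$ of $G$, the restriction of $N$ to edges of rank at most $i$ is a maximum matching of $G_i$, and from this (plus containment in $G'_i$) you conclude it is a maximum matching of $G'_i$. The premise is wrong: a rank-maximal matching restricted to rank $\le i$ is in general \emph{not} a maximum matching of $G_i$ --- this is precisely why the reduced graphs exist. Take $a_1$ with first choice $p_1$ and second choice $p_2$, and $a_2$ whose only (rank-two) choice is $p_1$: the unique rank-maximal matching is $\{(a_1,p_1)\}$, of size one, while $G_2$ admits a maximum matching of size two. What your argument needs is that $N$ restricted to rank $\le i$ is a maximum matching of the \emph{reduced} graph $G'_i$, and establishing that requires carrying the signature invariant (the paper's invariant 3) through the induction: $N$ restricted to rank $\le i$ is rank-maximal in $G_i$ (otherwise $N$ would not be rank-maximal in $G$), hence has cardinality $s_1+\dots+s_i=|M_i|$; since it is contained in $G'_i$ by invariant 1 and $M_i$ is maximum in $G'_i$ by invariant 2, it is itself maximum in $G'_i$. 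Your stated invariant omits this signature bookkeeping, so the step ``hence also of $G'_i$'' cannot be derived from what you maintain. Once repaired, your deductions (i) and (ii) and the correctness of the pruning go through as described. (A minor further point: the parenthetical ``and no others'' in your invariant is also false --- $G'_i$ may retain $UU$ or $OE$ edges lying in no maximum matching of $G'_i$ --- but nothing in your argument uses it.)
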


\begin{algorithm}[h]
\caption{for computing a rank-maximal matching}
\label{alg1}
\begin{algorithmic}[1]
%\PROCEDURE {RankMaximalMatching}{G}
\State $G'_1 \gets G_1$
\State Let $M_1$ be any maximum matching of $G'_1$.
\For {$i = 1, 2, \ldots, r$}
	\State Determine a partition of the vertices of $G'_i$ into the sets $E(G'_i)$, $O(G'_i)$ and $U(G'_i)$.
	\State Delete all edges in $\mathcal{E}_j$ (for $j > i$) which are incident on nodes in $O(G'_i) \cup U(G'_i)$.\State Delete all $O(G'_i)O(G'_i)$ and $O(G'_i)U(G'_i)$ edges from $G'_i$.
	\State Add the edges in $\mathcal{E}_{i+1}$ and call the resulting graph $G'_{i+1}$. 
	\State Determine a maximum matching $M_{i+1}$ in $G'_{i+1}$ by augmenting $M_i$.
	\EndFor
	\Return $M_{r}$
%\ENDPROCEDURE
\end{algorithmic}
\end{algorithm}

The following invariants of Algorithm \ref{alg1} are proven in \cite{IrvingKMMP06}.
% and also stated in \cite{rankswi}.

\begin{enumerate}  
	\item For every $1 \leq i \leq r$, every rank-maximal matching in $G_i$ is contained in $G'_i$.
	\item The matching $M_i$ is rank-maximal in $G_i$, and is a maximum matching of $G'_i$.
	\item If a rank-maximal matching in $G$ has signature $(s_1, s_2, \ldots, s_i, \ldots, s_r)$ then $M_i$ has signature $(s_1, s_2, \ldots, s_i)$. 
	\item The graphs $G'_i$ ($1 \leq i \leq r$) constructed during the execution of Algorithm \ref{alg1} are independent of the rank-maximal matching computed by the algorithm.
\end{enumerate}

We say that a vertex $v$ is {\em alive} in $G'_i$ iff $v \in \bigcap_{j=1}^{i-1} E(G'_j)$.
$Alive(i)$ denotes the set of vertices that are alive in $G'_i$.

\begin{fact}\label{alive}
Each edge of $G'_i \setminus G'_{i-1}$ ($1 \leq i \leq r$) has both endpoints in $Alive(i)$.
\end{fact} 
This follows from line $5$  of Algorithm \ref{alg1}.

Algorithm \ref{alg1} can be modified so that it terminates in $c$ iterations, where $c$ is the maximum rank of an edge in an optimal solution. We simply stop when there are no more edges to add. It is shown in  \cite{Paluch13} that the last iteration, in which edges are added is iteration $c$. Observe also that by Fact \ref{alive} in every iteration, in which $G'_i$ contains edges of rank $i$, matching $M_i$ is augmented  and thus contains at least one edge of rank $i$.

\subsection{The Dynamic Rank-Maximal Matching Problem}

In the dynamic variant of the rank-maximal matching problem, we are given a graph $G$ and we wish to maintain a rank-maximal matching of this graph under a sequence of the following kinds of operations:

\begin{enumerate}
	\item Add a vertex $v$ along with the edges incident to it to $G$.
	\item Delete a vertex $v$ along with the edges incident to it from $G$.
	\item Add an edge $e$ to $G$.
	\item Delete an edge $e$ from $G$.
\end{enumerate}

In order to perform the above operations efficiently, we additionally maintain all structures which are normally computed by Algorithm \ref{alg1}, i.e. the reduced graphs $G'_i$ along with their $EG$-decompositions and the matchings $M_i$. Let us denote the modified graph obtained from $G$ after performing one of the operations $1-4$ by $H = (\mathcal{A}' \cup \mathcal{P}', \mathcal{F}_1 \cup \mathcal{F}_2 \cup \ldots \cup \mathcal{F}_{r})$, where $\mathcal{F}_i$ consists of the edges of rank $i$ in $H$. Similarly for each $1 \leq i \leq r$ denote: $H_i = (\mathcal{A}' \cup \mathcal{P}', \mathcal{F}_1 \cup \mathcal{F}_2 \cup \ldots \cup \mathcal{F}_i)$ and $H'_i = (\mathcal{A}' \cup \mathcal{P}', \mathcal{F}'_1 \cup \mathcal{F}'_2 \cup \ldots \cup \mathcal{F}'_i)$.  Our goal is to compute a rank-maximal matching of $H$ along with all the reduced graphs $H'_i$ and the matchings $N_i$ (where $N_i$ is a rank-maximal matching of $H_i$). Note that we do not execute Algorithm \ref{alg1} on $H$ but update the existing graphs $G'_i$ in order to obtain $H'_i$ and the matchings $M_i$ in order to obtain $N_i$. Also, before  finding the reduced graphs $H'_i$, we first compute graphs $\tilde{H}_i=(\mathcal{A}' \cup \mathcal{P}', \mathcal{\tilde{F}}'_1 \cup \mathcal{\tilde{F}}'_2 \cup \ldots \cup \mathcal{\tilde{F}}'_i)$ that may differ slightly from graphs $H'_i$. Each graph $\tilde{H}_i$
has the property that every edge of any rank-maximal matching of $H_i$ is contained in $\tilde{H}_i$ and $\tilde{H}_1=H_1$. 

It turns out that we do not actually need to construct separate algorithms for each of the operations $1-4$. As we show in Section \ref{remainingUpdates} only the operation $1$ is truly needed. We prove that the remaining updates can be simulated with a constant number of executions of the operation $1$.  In the remainder of the paper, we focus on the implementation of the operation $1$.

It is easy to observe that an algorithm that maintains a rank-maximal matching after adding an applicant is symmetrical to the case where we add a post. Hence, without loss of generality, in the remainder of the paper, we can assume that a new applicant $a_0$ arrives and we want to maintain a rank-maximal matching after adding that applicant.

\section{Algorithm for Checking if Update is Necessary} \label{simple}
Before we describe our algorithm for maintaining a rank-maximal matching under a sequence of operations of type $1$, we first introduce and solve a simplified variant of the problem. The main goal of this section is to build some intuition.

Our first assumption is that a newly arriving applicant $a_0$ has only one edge incident on it. We also slightly change our goal. Instead of computing a rank-maximal matching of $H$ we only wish to determine if a rank-maximal matching $M$ of $G$ remains rank-maximal in $H$. Our goal is to solve this problem in $\mathcal{O}(m)$ time. The following is the main theorem of this section:

\begin{theorem}
Assume that we are given reduced graphs $G'_1, G'_2, \ldots, G'_{r+1}$ of $G$, their EG-decompositions and matchings $M_1$, $M_2, \ldots, M_{r}$. Then there is an $\mathcal{O}(m)$ time algorithm that determines if $M_r$ is a rank-maximal matching of $H$.
\end{theorem}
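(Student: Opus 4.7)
The plan is a case analysis based on the quantity $j_p := \min\{j : p\in O(G'_j)\cup U(G'_j)\}$, which can be read off from the precomputed EG-decompositions in $O(n)$ time; let $i$ denote the rank of the unique new edge $(a,p)$.

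If $j_p>i$, so that $p\in E(G'_i)$, then both $a$ (trivially unmatched and hence even) and $p$ are even in $H'_i=G'_i\cup\{(a,p)\}$, making $(a,p)$ an $EE$ edge. By standard matching theory this yields an augmenting path in $H'_i$, so the maximum matching of $H'_i$ has size $|M_i|+1$, the $i$-th coordinate of the rank-maximal signature in $H$ strictly exceeds $s_i$, and $M_r$ is not rank-maximal; I return ``no''. Conversely, if $j_p<i$, then in the run of Algorithm~\ref{alg1} on $H$ (which through stage $j_p-1$ is identical to the run on $G$, since $a$ has no edges of rank below $i$) the pruning step at stage $j_p$ deletes every rank-$k$ edge with $k>j_p$ incident to $p$, and in particular $(a,p)$ is removed. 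Thereafter $a$ is isolated in every $H'_j$ and the rest of the execution on $H$ coincides with that on $G$, so the produced signature equals $\rho(M_r)$ and $M_r$ is rank-maximal.

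The subtle case is $j_p=i$. Here $(a,p)$ survives into $H'_i$ but, because $p\notin E(G'_i)$, cannot augment $M_i$ there; yet the new unmatched vertex $a$ can still change the EG-decomposition of $H'_i$ by pulling some vertices out of $U(G'_i)$ into $E$ or $O$, which may relax the pruning at stage $i$ and thereby unlock a signature-improving augmenting path from $a$ at a later stage. My plan is to rebuild the EG-decomposition of $H'_i$ by one $O(m)$ BFS from $a$ along $M_i$-alternating paths and then, using the precomputed decompositions of $G'_{i+1},\ldots,G'_{r+1}$, test in a single traversal whether any newly-unpruned edge closes an alternating walk from $a$ whose rank profile is lexicographically positive. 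The main obstacle, which I expect to be the technical heart of the proof, is to show that this structural test is equivalent to the condition $T+e_i\preceq\rho(M_r)$, where $T$ is the signature of a rank-maximal matching of $G-p$ (every matching of $H$ beating $\rho(M_r)$ must use $(a,p)$ and therefore comes from a matching in $G-p$), and that the whole check fits within $O(m)$ without re-running Algorithm~\ref{alg1}.
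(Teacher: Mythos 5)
Your case analysis on $j_p=\min\{j: p\in O(G'_j)\cup U(G'_j)\}$ correctly disposes of the two easy cases: if $j_p<i$ the edge $(a,p)$ is pruned at phase $j_p$ and the two executions of Algorithm~\ref{alg1} coincide thereafter, and if $j_p>i$ the edge joins two vertices of $E(G'_i)$ in $H'_i=G'_i+(a,p)$, so by Lemma~\ref{lemGE}(1) the maximum matching of $H'_i$ grows and $M_r$ cannot be rank-maximal. The gap is in the case $j_p=i$, which is the entire content of the theorem, and your sketch for it (one BFS from $a$ in $H'_i$ plus ``a single traversal'' against the precomputed decompositions) does not capture the multi-phase cascade that the paper's Algorithm~\ref{checkingalgorithm} is built to track. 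Concretely: (i) if $p\in U(G'_i)$, the vertices of $U(G'_i)$ that become even in $H'_i$ carry edges of ranks $>i$ that were deleted during the run on $G$ but survive in the run on $H$ (the paper's \emph{activated edges}); such an edge of rank $k$ only re-enters at phase $k$, where its far endpoint may again lie in $U(G'_k)$ and activate further vertices, whose still-higher-rank edges re-enter still later, and so on to arbitrary depth. A BFS in $H'_i$ cannot see edges of rank $>i$, and you do not say how these second- and higher-order activations are discovered by a single traversal. (ii) If $p\in O(G'_i)$ (still $j_p=i$), Lemma~\ref{lemGE}(2) says the decomposition of $H'_i$ is \emph{unchanged}, so your BFS at phase $i$ reports nothing --- yet the edge $(a,p)$ can still invalidate $M_r$ at a later phase $k'$ where $p$ drops from $O(G'_{k'-1})$ to $U(G'_{k'})$; this is exactly the $AE_o$ mechanism in Algorithm~\ref{checkingalgorithm}, and your test has no counterpart for it.

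Your closing reduction --- $M_r$ is rank-maximal in $H$ iff $T+e_i\preceq\rho(M_r)$, where $T$ is the best signature of a matching of $G$ avoiding $p$ --- is a correct reformulation (lexicographic order is translation-invariant), but computing or certifying $T$ is as hard as the original problem, and you explicitly defer the equivalence between your structural test and this condition as ``the technical heart of the proof.'' That equivalence \emph{is} the theorem. For reference, the paper does not prove this statement in isolation either: it presents Algorithm~\ref{checkingalgorithm}, which iterates over all phases maintaining the sets $C$, $AV$, $AE_u$, $AE_o$ precisely to handle the cascades above, defers correctness to that of Algorithm~\ref{main} (Theorem~\ref{cor}), and gets $O(m)$ by charging each edge to the phase in which it is activated. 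To repair your argument you would need either to reproduce that per-phase bookkeeping or to prove that your one-shot test simulates it; neither is done here.
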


Let us first describe the main idea behind Algorithm \ref{checkingalgorithm}. From Invariant 2 of Algorithm \ref{alg1}, it directly follows that if $M_r$ is not a rank-maximal matching of $H$, then there exists $j$ such that $M_j$ is not a rank-maximal matching of $H_j$ and for each $i < j$ matching $M_i$ is rank-maximal in $H_i$. From the same invariant, it follows that $H'_j$ contains a larger maximum matching than $M_j$. Our goal is to iterate over $i = 1,2,\ldots,r$ and for each $i$ to determine the structure of the reduced graph $H'_i$. Based on the structure of $H'_i$, we simply check whether $H'_i$ contains a larger matching than $M_i$. If in some iteration $j$, we determine that $M_j$ is not a maximum matching of $H'_j$ then obviously $M_r$ is not a rank-maximal matching of $H$. Otherwise we claim that $M_r$ remains rank-maximal in $H$.

Note that if we follow the above approach, in the worst case we have to check whether $M_i$ is a maximum matching of $H'_i$ for each $1 \leq i \leq r$. Since we are interested in an $\mathcal{O}(m)$ time algorithm we cannot afford to compute each $H'_i$ from scratch as in Algorithm \ref{alg1}. We claim that since $G$ and $H$ differ only by one edge for each $i$, we can construct the graph $H'_i$ based on $G'_i$ and $H'_{i-1}$. Additionally, we can also check whether $M_i$ is a maximum matching of $H'_i$ based on the $EG$-decomposition of $G'_i$. 

In the following auxiliary lemma, we examine how the maximum matching $M$ in a bipartite graph $G$ and the EG-decomposition of $G$ change when we add one edge to the graph. 

We say that a vertex $v$ has {\em type} even, odd or unreachable in $G$ if $v\in E(G), \ v\in O(G)$ or $v \in U(G)$, respectively. 
Similarly, we say that $v$ has the same type in $G$ and $J$ if ($v \in X(G)$ iff $v \in X(J)$), where $X \in \{E, O, U \}$.

\begin{lemma} \label{lemGE}
Let $G = (\mathcal{A} \cup \mathcal{P}, \mathcal{E})$ be a bipartite graph, $M$ a maximum matching of $G$ and  $a\in \mathcal{A}$ and $p \in \mathcal{P}$ two vertices of $G$ such that $(a,p) \notin \mathcal{E}$ and $a$ has type $E$  in the EG-decomposition of $G$ ($a \in E(G)$). Then the graph $J= (\mathcal{A} \cup \mathcal{P}, \mathcal{E} \cup (a,p))$ has the following properties:
\begin{enumerate}
	\item If $p \in E(G)$, then the edge $(a,p)$ belongs to every maximum matching of $J$. A maximum matching of $J$ is of size $|M|+1$.
	\item If $p \in O(G)$, then the edge $(a,p)$ belongs to some maximum matching of $J$ but not to every one and $M$ remains a maximum matching of $J$. Additionally,  the EG-decomposition of the graph $J$ is the same as that of $G$. 
	\item If $p \in U(G)$, then the edge $(a,p)$ belongs to some maximum matching of $J$ but not to every one and $M$ remains a maximum matching of $J$. Additionally,  the EG-decomposition of the graph $J$ is different from that of $G$ in the following way. A vertex $v \in U(G)$ belongs to $E(J)$ (respectively, $O(J)$) if there exists an even-length (corr., odd-length) alternating path starting from the vertex $a$ that contains the edge $(a,p)$ and ends at $v$. Apart from this every vertex has the same type in the EG-decompositions of $G$ and $J$.
\end{enumerate}

\end{lemma}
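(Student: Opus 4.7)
The plan is to handle the three cases separately; in each case I analyze (i) whether $M$ remains a maximum matching of $J$, (ii) which maximum matchings of $J$ contain the new edge $(a,p)$, and (iii) how the EG-decomposition changes. The common tool is that any $M$-alternating path in $J$ not entirely contained in $G$ must use the edge $(a,p)$, and since $(a,p)\notin M$, whenever $(a,p)$ appears in the interior of an alternating path it is flanked by two matched edges, namely the matching edges at $a$ and at $p$. This rigidly constrains both augmenting paths in $J$ and the structure of new alternating paths from unmatched vertices.

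For Case 1 I construct an $M$-augmenting path in $J$ by concatenating an even alternating path $P_a$ from an unmatched vertex $u_a\in\mathcal{A}$ to $a$ (it exists because $a\in E(G)$), the new edge $(a,p)$, and the reverse of an even alternating path $P_p$ from an unmatched $u_p\in\mathcal{P}$ to $p$ (it exists because $p\in E(G)$); the resulting alternating walk reduces to a simple augmenting path by a standard shortening argument. Hence the maximum matching of $J$ has size $|M|+1$, and any such matching must contain $(a,p)$, for otherwise it would be a matching of $G$ of size exceeding $|M|$. For Case 2 I first show that $M$ is still maximum in $J$ by contradiction: an augmenting path through $(a,p)$ would force, via the matched-edge constraint on both sides of $(a,p)$, an even-length alternating path from an unmatched vertex to $p$, placing $p\in E(G)$ and contradicting $p\in O(G)$. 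To exhibit a max matching of $J$ containing $(a,p)$ I pick a max matching $N$ of $G$ with $a$ free (possible since $a\in E(G)$), observe that $p$ is matched in $N$ (because $p\in O(G)$) to some $a''$, and return $N-(a'',p)+(a,p)$. The inclusions $E(G)\subseteq E(J)$ and $O(G)\subseteq O(J)$ follow because paths in $G$ survive in $J$ and $E(J),O(J)$ must be disjoint by the EG lemma applied to $J$. Ruling out that any $v\in U(G)$ enters $E(J)\cup O(J)$ amounts to showing a hypothetical new alternating path through $(a,p)$ cannot end at $v$: the $p\to a$ direction is excluded exactly as above, and in the $a\to p$ direction the path must continue through the matched edge $(p,M(p))$, whose other endpoint already lies in $E(G)$, so that any further extension would already exist in $G$ and contradict $v\in U(G)$.

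For Case 3 the arguments for (i) and (ii) run in parallel with Case 2, because the relevant contradiction uses only $p\notin E(G)$, which still holds. The bulk of the work is (iii). After reproving $E(G)\subseteq E(J)$ and $O(G)\subseteq O(J)$, I show that the only way $v\in U(G)$ can enter $E(J)\cup O(J)$ is via an alternating path $w\to\cdots\to a\to p\to M(p)\to\cdots\to v$ from an unmatched vertex $w$ in $J$, where the prefix $w\to\cdots\to a$ is an even alternating path in $G$ witnessing $a\in E(G)$ and the suffix $a\to p\to M(p)\to\cdots\to v$ is an alternating path whose first edge is $(a,p)$ and whose remaining edges all lie in $G$ (since a simple alternating path uses $(a,p)$ at most once). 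Because the prefix has even length, the parity of the whole path from $w$ equals the parity of the suffix, giving the stated characterization. The converse direction is immediate: given an alternating path starting with $(a,p)$ and ending at $v$, prepend an even alternating path from some unmatched $u_a$ to $a$ to witness $v\in E(J)$ or $v\in O(J)$ according to parity.

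The hard part will be the Case 3 EG-analysis. The delicate points are: verifying that an alternating path through $(a,p)$ cannot traverse it in the $p\to a$ direction (which relies on $p\notin E(G)$); correctly handling the degenerate subcase in which $a$ is itself unmatched, so that the prefix has length zero; checking that the suffix after the matched edge $(p,M(p))$ lies entirely in $G$, so the parities of prefix and suffix combine as claimed; and invoking the disjointness of $E(J),O(J),U(J)$ from the EG lemma applied to $J$ to preclude any $v\in E(G)\cup O(G)$ from shifting to a different class once the new edge is added.
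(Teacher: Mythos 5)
Your proof is correct: the case analysis on the type of $p$, the construction of an augmenting path from the two even alternating paths when $p\in E(G)$, the parity/flanking-matched-edge contradictions ruling out augmenting paths when $p\notin E(G)$, and the direction-of-traversal analysis of $(a,p)$ for the type changes are exactly the arguments the paper uses in its proof of the more general Lemma~\ref{aux1}, of which Lemma~\ref{lemGE} is the single-edge special case (the paper never proves Lemma~\ref{lemGE} separately). The only points you gloss over are the standard walk-to-path reductions when concatenated alternating paths intersect, which the paper treats at the same level of informality.
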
 
\begin{proof} 

We first prove $(1)$. Since we have $a, p \in E(G)$, from the properties of Edmonds-Gallai decomposition there exist alternating paths $P_1$ and $P_2$ in $G$ with respect to $M$ from free vertices $v_1, v_2$ ending in respectively $a$ and $p$. $v_1$ and $v_2$ must be distinct otherwise  the alternating paths  $P_1$ and $P_2$ and the edge $(a,p)$ creates an odd cycle. It can be easily shown that we can combine $P_1$ and $P_2$ to obtain an augmenting path from $v_1$ to $v_2$ containing $(a, p)$. This implies that any maximum matching of $J$ is of size $|M| + 1$ and $(a, p)$ belongs to every maximum matching of $J$.

Let us now prove $(2)$. We first show that $M$ is a maximum matching of $J$. Let us assume by contradiction that it is not true. Then in $J$ there exists an augmenting path $P_1$ from a free vertex $x_1$ to another free vertex $x_2$ with respect to $M$. The path $P_1$ contains $(a, p)$ as otherwise $M$ would not be a maximum matching of $G$. Let us consider a subpath of this path which does not contain $(a, p)$ but contains $p$. Such a subpath is of course of even length and is contained in $G$. This implies that $p \in E(G)$, which leads to a contradiction. Thus $M$ is a maximum matching of $J$.

Let us now consider an alternating path $P$ of even length from a free vertex that contains $(a, p)$ and ends with the matched edge incident to $p$. Note that $M \oplus P$ is a maximum matching of $J$ containing $(a, p)$.

We now prove that $EG$-decompositions of $G$ and $J$ are identical. Let $v \in E(G)$. From the properties of $EG$-decomposition in $G$ there exists an alternating path of even length from a vertex $x_0$ to $v$ with respect to $M$. Such a path is also contained in $J$ thus we have $v \in E(J)$. This implies that $E(G) \subseteq E(J)$. We can similarly show that $O(G) \subseteq O(J)$. To prove that $EG$-decompositions are identical it suffices to show that $U(G)  \subseteq U(J)$. Let us now assume by contradiction that there exists $v \in U(G)$ such that $v \notin U(J)$. 
Let $P$ be an $M$-alternating path from a free vertex to $v$ in the graph $J$. $P$ must contain the edge $(a,p)$, otherwise the path is also present in $G$. Let $P_1$ denote the even length subpath of $P$ from the free vertex to $a$ and $P_2$ the subpath between $p$ and $v$. Clearly, both $P_1$ and $P_2$ appear in $G$.

Let $P_2 = \{p = v_1, v_2, \ldots, v_n = v\}$ be the alternating path where $p \in O(G)$ and $v \in U(G)$. Let $i$ be the smallest index such that $v_i \in U(G)$. Then $v_{i-1} \in E(G) \cup O(G)$. If $v_{i-1} \in E(G)$, then by Lemma \ref{EG} point \ref{EG3} $v_i \in O(G)$. If $v_{i-1} \in O(G)$ then by the construction of $P_2$, $(v_{i-1}, v_i) \in M$. Then by Lemma \ref{EG} point \ref{EG2B}  $v_i \in E(G)$. In both cases, we arrive at a   contradiction.  Therefore $v \in O(G) \cup E(G)$.

It remains to show $(3)$. The majority of the proof is analogous to $(2)$. We can similarly prove that $M$ is a maximum matching of $J$ and that $(a, p)$ belongs to some maximum matching of $J$ but not to all of them. Analogously we show that $E(G) \subseteq E(J)$ and $O(G) \subseteq O(J)$. 
Suppose $v \in U(G)$ but $v \notin U(J)$. Without loss of generality, assume that $v \in E(J)$. We prove that there is an even length alternating path from the vertex $a$ to $v$. Since $v \in E(J)$ there is an even length alternating path $P$ from a free vertex to $v$ in $J$. Clearly $P$ contains the edge $(a,p)$ and let us define $P_1$ as the even length subpath of $P$ from the free vertex to $a$. Thus $P \setminus P_1$ is an even length alternating path from $a$ to $v$. Conversely, let there be an even length alternating path $P_2$ between $a$ and $v$. Since $a \in E(J)$ there is an even length alternating path $P_1$ between a free vertex and $a$. Consequently, $P_1 \cup P_2$ contains an even length alternating path from a free vertex to $v$ in $J$. Therefore $v \in E(J)$. 
The proof is analogous if $v \in U(G) \cap O(J)$. \qed
\end{proof}

%Similarly as in $(2)$ we can prove that there exists an alternating path $(v_0, v_1, \ldots, v_k)$ from a free vertex $v_0$ to $v_k = v$. Let $v_i$ be the first vertex on the path such that $v_i \in U(G)$. We again show that we have $a = v_{i-1}$ and $p = v_i$. This implies that there exists either an even or odd length alternating path starting with $(a, p)$ to $v$. Thus $(3)$ holds.

Based on the above lemma, we can determine if a maximum matching of $H$ is larger than a maximum matching of $G$. If maximum matchings of $G$ and $H$ are of the same size, then we can obtain the $EG$-decomposition of $H$ from the $EG$-decomposition of $G$. If $p \in O(G)$ both $EG$-decompositions are identical. If we have $p \in U(G)$ we can easily update the $EG$-decomposition of $G$ to the $EG$-decomposition of $H$ by a simple breadth-first search along alternating paths from  the edge $(a,p)$.

Below we describe Algorithm \ref{checkingalgorithm} in more details. In particular, we show how to apply Lemma \ref{lemGE} in order to efficiently obtain $H'_{i}$ from graphs $H'_{i-1}$ and $G'_i$.

Let us assume that the newly added edge $(a_0, p_0)$ is of rank $k$. From the pseudocode of Algorithm \ref{alg1}, we can see that for each $i$ such that $1 \leq i < k$ we have $G'_i = H'_i$, and that $M_i$ is a rank-maximal matching of $H_i$. How do graphs $G'_k$ and $H'_k$ differ? One can easily see that either $G'_k + (a_0, p_0) = H'_k$ or $G'_k = H'_k$ holds. The latter case happens when the edge $(a_0, p_0)$ is removed from $\mathcal{F}_k$. It can only happen if in some iteration $j < k$ we have $p_0 \notin E(G'_j)$. 

From now on we assume that $p_0 \in \bigcap_{i=1}^{k-1} E(G'_i)$. One can check that when we enter the loop $for$ in  line $4$ of Algorithm \ref{alg1} we have $G'_k + (a_0, p_0) = H'_k$. We can use Lemma \ref{lemGE} to obtain the information about the EG-decomposition of $H'_k$ from the decomposition of $G'_k$. From the statement of Lemma \ref{lemGE} it follows that there are three cases depending on the type of $p_0$ in $G'_k$.

Case $(1)$ - $p_0 \in E(G'_k)$. We can simply halt the algorithm and claim that $M$ is not a rank-maximal matching of $H$.

Case $(2)$ - $p_0 \in U(G'_k)$. From Lemma \ref{lemGE} we can see that some vertices may belong to $U(G'_k) \cap E(H'_k)$ or $U(G'_k) \cap O(H'_k)$. If a vertex $v \in U(G'_k) \cap E(H'_k)$  (resp. $v \in U(G'_k) \cap O(H'_k)$) then we say that $v$ changes its type from $U$ to $E$ (resp. $O$) in phase $k$. What implications does this fact have on the execution of Algorithm \ref{alg1} on $H$? Note that in lines $5$ and $6$ of
Algorithm \ref{alg1}, we remove some edges incident to vertices of types $O$ and $U$. If $v$ changes its type from $U$ to $E$ in phase $k$ then  the edges incident to $v$ that are deleted in phase $k$ during the execution of Algorithm \ref{alg1} in $G$, are not deleted in $H$. 
Such edges become \emph{activated} and in the pseudocode we denote the set of these edges as $AE_u$. Additionally, vertices which change type from either $U$ to $E$ are called \emph{activated vertices}. The set of such vertices is denoted as $AV$.

Case $(3)$ - $p_0 \in O(G'_k)$. We already know from Lemma \ref{lemGE} that the presence of $(a_0, p_0)$ in $H'_k$ does not affect its EG-decomposition. It turns out however that if for some $k' > k$ we have $p_0 \in U(G'_{k'})$ but $p_0 \in O(G'_{k'-1})$ then the presence of $(a_0, p_0)$ in $H$ might affect the EG-decomposition of $H'_{k'}$, but will not have any impact on the decompositions of graphs $H'_{l}$ for $k < l < k'$. Such edges also become \emph{activated} and added to $AE_o$.

The main idea behind the remaining part of the algorithm is to maintain the set $AE$ of \emph{activated edges} so that in any phase $k' > k$ a reduced graph $H'_{k'}$ is obtained from $G'_{k'}$ by adding the activated edges to this graph. The EG-decomposition of $H'_{k'}$ is then computed with the aid of decompositions of $G'_{k'}$ and $H'_{k'-1}$.  It is important to note that in phase $k$ graphs $G'_k$ and $H'_k$ differ by exactly one edge which allows us to apply Lemma \ref{lemGE}, whereas in  phase $k'$ ($k' > k$) $H'_{k'}$ may potentially contain multiple activated edges. We simply apply Lemma \ref{lemGE} to each activated edge in order to determine if $M_{k'}$ is a maximum matching of $H'_{k'}$.

The correctness of the algorithm follows from the above discussion and Lemma \ref{lemGE}. It is also included in Theorem \ref{cor}.

In the pseudocode of the algorithm  the subgraph $C$ contains a new vertex $a_0$ and  vertices that are at this point unreachable in $G$ (contained $U(G'_i)$) but belonging to $E(H'_i) \cup O(H'_i)$. Thus each activated vertex belongs to $C$ and $C$ contains
(the "upper") part of the alternating subgraph $T$ mentioned in the introduction. $R$ represents the rest of the graph - vertices that have the same type in $G'_i$ and $H'_i$.

\begin{algorithm} 
\caption{for checking if $M_r$ is a rank-maximal matching of $H$ } \label{checkingalgorithm}
\begin{algorithmic}[1]
\State $C \leftarrow \{a_0\}$, $AV \leftarrow \{a_0\}$, $AE \leftarrow \emptyset$
\State $i \leftarrow 1$
\While {$i \leq r$}
	\State $R \leftarrow G'_i \setminus C$
	 \ForAll{ $a \in AV$  }
%	  \State $\tilde{F}_i(a) \leftarrow \{(a,p)\in \F_i: p \in Alive(i)\}$
	 \State $AE \leftarrow AE \cup \{(a,p)\in \mathcal{F}_i: a\in AV \wedge p \in Alive(i)\}$
	 \EndFor
	\If { there exists $(a, p) \in AE$ such that  $p \in E(G'_i)$}
	  \Return $M_r$ is not a rank-maximal matching of $H$
	\Else
		\State $H'_i \leftarrow C \cup R \cup AE$
		\State $AE_u \leftarrow \{(a,p) \in AE: \ p \in U(G'_i)\}$
    \ForAll{$S$ : $S$ is an even-length $M_i$-alt. path in $H'_i$ between $a_0$ and $a \in U(G'_i)$}
		\State $V(C) \leftarrow V(C) \cup V(S), \ \   \mathcal{E}(C) \leftarrow \mathcal{E}(C) \cup \{(a,p) \in G'_i: a, p \in S\}$
		\State $AV \leftarrow AV \cup \{a\}$
		%\EndIf
		\EndFor
		\State $AE_o \leftarrow \{(a,p) \in AE \cup G'_i: a \in C \wedge \ p \in O(G'_i)\}$
		\State $AE \leftarrow AE \cup AE_o \setminus AE_u$
	 % \State $M_c \leftarrow M_c \cup (M_r \cap C)$
		\EndIf
	\State $i \leftarrow i+1$
\EndWhile
 \Return $M_r$ is a rank-maximal matching of $H$
\end{algorithmic}
\end{algorithm}

The following example (Figure \ref{section3}) illustrates the implementation of Algorithm \ref{checkingalgorithm}.  In this example we check if a rank-maximal matching of $G$ is also a rank-maximal matching of $H$  or not. Here $p_0$ is an alive vertex in iteration $3$, hence we can add $(a_0, p_0)$ of rank $3$ to $G$. $a_1$ is an activated vertex in the third iteration. Both $(a_1, p_1)$ and $(a_1, p_2)$ are the activated edges of rank $4$ incident to $a_1$. Note that $p_1 \in U(G'_4)$ and $p_2 \in O(G'_4)$. Hence we add the edge $(a_1, p_1)$ to $AE_u$ and $(a_1, p_2)$ to $AE_o$ after iteration $4$. $p_2$ becomes an unreachable vertex after iteration $6$, we move $p_2$ to $AE_u$ after this iteration.

There is no iteration $1 \leq i \leq 6$, such that we have an edge $(a,p)$ incident to the activated vertex $a$ and $p \in E(G'_i)$. Therefore, we can conclude that a rank-maximal matching of $G$ is indeed a rank-maximal matching of $H$. 
\begin{figure}
\centering
   \includegraphics[width=.8\textwidth]{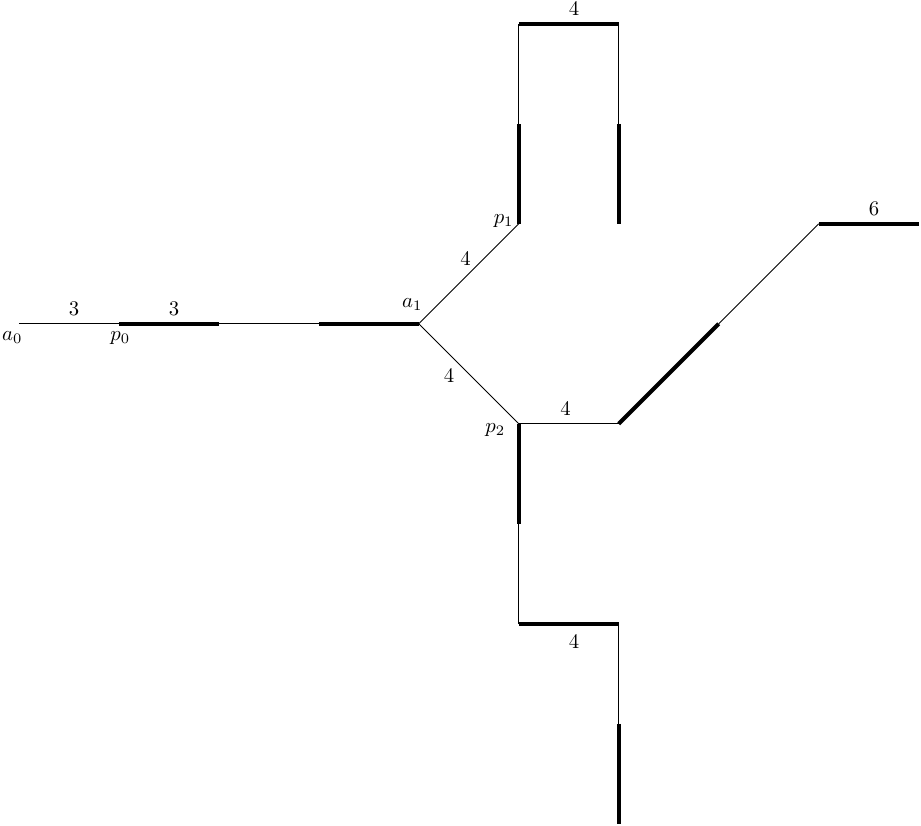}
	\caption{An example illustrating the implementation of Algorithm \ref{checkingalgorithm}}
	\label{section3}
\end{figure}
\section{Overview of the Algorithm} \label{aug}

In this section, we present some of the ideas behind Algorithm \ref{main} for updating a rank-maximal matching. The algorithm is essentially an extension of Algorithm \ref{checkingalgorithm}. The main difference is that at some point  Algorithm \ref{checkingalgorithm} in line $8$ may encounter an edge $(v,w)$ such that $v$ belongs to $U(G'_i) \cap E(H'_i)$   and $w$ belongs to $E(G'_i)$ and then
output \emph{"$M_r$ is not a rank-maximal matching of $H$"}. %Algorithm \ref{main} at this point attempts to augment the matching in order to obtain a rank-maximum matching of $H$. 

If we encounter such a situation in phase $i$, we have to compute matchings $N_i$, $N_{i+1}$, \ldots, $N_{r}$ based on $M_i$, $M_{i+1}$, \ldots, $M_{r}$. Note that we cannot separately search for augmenting paths in each of the graphs $H_i$, $H_{i+1}$, \ldots, $H_r$ as this would  lead to an algorithm of $\mathcal{O}(r(n+m))$ complexity (matching the complexity of \cite{nimbhorkar2018dynamic}), instead of claimed $\mathcal{O}(\min(c'n ,n^2) + m)$. 

Let us examine two examples depicted in Figure \ref{augfig}. Here the edge $(a_0,p_0)$ is of rank $1$ and the edge $(a_1, p_1)$ of rank $2$.
The vertex $a_1$ belongs to $U(G'_2) \cap E(H'_2)$ and thus is an activated vertex and $p_1$ belongs to $E(G'_2)$ - hence Algorithm \ref{checkingalgorithm} outputs the answer "$M_r$ is not a rank-maximal matching of $H$". This means that in $H'_2$ there exists an $M_2$-augmenting path containing the edges $(a_0,p_0)$ and $(a_1, p_1)$.%However, in order to obtain a rank-maximal matching $N$ of $H$, we would like to carry out the augmentation as late as possible. In fact, once we know that a rank-maximal matching of $G$ is not a rank-maximal matching of $H$ and the augmentation in some phase $i$ is inevitable, we want to postpone augmenting $N_i$ and thus changing $N$ till the last phase. 

In the first example of Figure \ref{augfig}, we can notice that $H'_2$ contains two activated edges of rank $2$ - $(a_1,p_1)$ and $(a_1, p_2)$,
and to obtain a rank-maximal matching of $H_2$, we can augment $M_2$ using either an augmenting path starting at $a_0$, going through  $(a_1,p_1)$ and ending at $p_6$ or a path going through $(a_1, p_2)$  and ending at $p_3$. If $H$ did not contain any edges of rank greater than $2$, then the alternating subgraph $T$ mentioned in the introduction would consist of exactly those two paths. At this point, i.e., in phase $2$ we do not know, however, if at the end of the algorithm - in phase $r$, $T$ will also contain these paths. The only thing we are certain of is that
to obtain a rank-maximal matching of $H$, we have to apply a path beginning at $a_0$, containing $(a_0,p_0), (p_0, a_1)$ and some activated edge incident to $a_1$. Therefore at this point $T$ consists of edges $(a_0,p_0)$ and $(p_0,a_1)$ and we keep an eye on the edges $(a_1, p_1), (a_1, p_2)$.

Next, we  observe that $H'_2$ does not contain any new activated vertices. The graph $H'_3$ is identical to $H'_2$ and to obtain a rank-maximal matching of $H_3$ we may use one of the same two augmenting paths. $T$ does not change.
The vertex $p_1$ belongs to $E(G'_i)$ for every $i$ such that $2 \leq i \leq 4$ but the vertex $p_2$ belongs to $E(G'_i)$ for $i \in \{2,3\}$ and $p_2 \in U(G'_4)$. We can also see  that the graph $G'_4 \cup \{(a_0,p_0), (a_1, p_1), (a_1, p_2)\}$ contains only one  $M_4$-augmenting path.
Thus, if we had augmented $M_2$ using the path going through $(a_1, p_2)$, we would have to change it to get a rank-maximal matching of $H_4$.
On the other hand the path containing $(a_1, p_1)$ was present in the graph $G'_2 \cup \{(a_0,p_0), (a_1, p_1)\}$ and  is still augmenting in the graph 
$G'_4 \cup \{(a_0,p_0), (a_1, p_1)\}$.  We can check that after applying this path we indeed obtain a rank-maximal matching of $H_4$. The subgraph $T$ does not change but we stop observing the edge $(a_1, p_2)$ - we know that eventually this edge will certainly not belong to a rank-maximal matching of $H$. Therefore, to 
be able to update a rank-maximal matching in an efficient way, we observe the endpoints of the activated edges. If there exists an activated edge $e$, whose one endpoint is an activated vertex and the other a vertex of $E(G'_i)$, then we know that to get a  rank-maximal matching of $H_i$, we have to augment a rank-maximal matching of $G_i$.  We do not, however, augment the matching, but continue observing the endpoints. % till the moment when every activated edge has both endpoints in $U(G'_i) \cup O(G'_i)$ or till the last phase. %Also, whenever there exists an activated edge with one endpoint belonging to $E(G'_i)$, then there are no new activated vertices. There may be new activated edges that are incident to previously activated vertices. 

In the second example of Figure \ref{augfig}, the vertex $p_1$ belongs to $E(G'_i)$ for every $i$ such that $1 \leq i \leq 4$ and it belongs to $U(G'_5)$. Thus in phases $2-4$ there are no new activated vertices and we use an augmenting path containing $(a_1, p_1)$. The endpoint $p_1$ of the activated edge $(a_1, p_1)$ does not belong to $E(G'_5)$. Hence, $(a_1, p_1)$ ceases to be part of an augmenting path in phase $5$. Indeed, the graph $G'_5 \cup \{(a_0,p_0), (a_1, p_1)\}$ does not contain any augmenting paths and we are stuck with a matching $M_5$ which is not rank-maximal in $H_5$. We observe that if we had augmented $M_4$ in the graph $G'_4 \cup \{(a_0,p_0), (a_1, p_1)\}$ obtaining a rank-maximal matching $N_4$ of $H_4$, then one of the edges of rank $5$ would not be present in the maximum matching of $G'_5 \cup \{(a_0,p_0), (a_1, p_1)\}$ if we computed  it by augmenting $N_4$. So, in order to get a rank-maximal matching of $H_5$ from $M_5$ we should "undo" one of the augmentations that was carried out
in phase $5$. Using matching terminology we should apply any even length $M_5$-alternating path starting at $a$ and  containing $(a_1, p_1)$ and one of the edges of rank $5$ belonging to $M_5$. Observe also that the vertices $a_3, a_4$ belong to $U(G'_5)$ but in $H'_5$ they are part of $E(H'_5)$ - thus we have new activated vertices. Till phase $4$ the alternating subgraph consists of edges $(a_0,p_0), (p_0, a_1)$ and we observe the edge $(a_1, p_1)$. In phase $5$ the subgraph $T$ contains additionally the edges $(a_1, p_1), (p_1, a_2), (a_2, p_2), (p_2, a_3), (a_3, p_3), (p_3, a_4)$ and we observe the activated edges incident to $a_3$ and $a_4$.

\begin{figure}
\centering
    \includegraphics[scale=0.8]{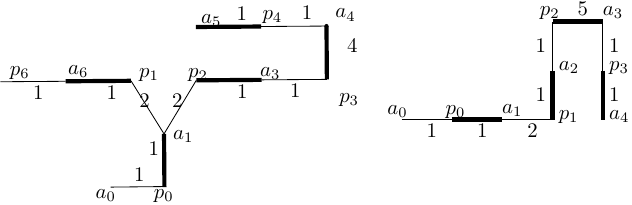}
  \caption{The thick edges belong to the matching.}
	%In the first figure, $p_1$ belongs to $E(G'_i)$ for every $i$, and we apply an augmenting path in the first figure. In the second figure, $p_1$ is unreachable in $G'_5$. In the third figure, $p_1$ becomes odd in $G'_4$ and unreachable in $G'_5$. So we apply an alternating path containing the edge $(a_i, p_i)$ in second and third figure} 
\label{augfig}
\end{figure}

To summarize, Algorithm \ref{main} is an extension of Algorithm \ref{checkingalgorithm}, where once we discover in phase $i$ that $M$ is not a rank-maximal matching of $H$, we make a note that the matching will have  to be augmented and start building an alternating subgraph $T$. It turns out that we do not need to augment all of the matchings $M_i$, $M_{i+1}$, \ldots, $M_r$ one by one. Instead we can afford to wait till phase $r$ and then apply either an augmenting path or an even length alternating path contained in the alternating subgraph $T$.  During the course of the computation and before we actually update the matching we keep observing the endpoints of the activated edges. Each activated edge has exactly one endpoint in the subgraph $T$ and forms a potential  extension of $T$. If the endpoint of at least one activated edge belongs to $E(G'_i)$, we are in a so-called augmenting phase. In this phase we do not activate new vertices and do not extend the subgraph $T$. We wait till the endpoints of activated edges fall in $U(G'_i)\cup O(G'_i)$ or the last phase. If the endpoints of activated edges belong to $U(G'_i) \cup O(G'_i)$, we are in a non-augmenting phase,
where we do not need to change the matching  but instead have to activate some new vertices and also extend the subgraph $T$. Augmenting and non-augmenting phases may alternate. An important thing that  allows to save time  is that we do not traverse the graph $H$ beyond the subgraph $T$.

 Once we have $N_r$, we can obtain matchings $N_i$, $N_{i+1}$, \ldots, $N_{r-1}$ easily. We also show how to update reduced graphs $G'_i$ in order to obtain reduced graphs $H'_i$. This part of the algorithm is presented in Section \ref{updates}.

More detailed description of this approach is presented in Section \ref{alg}. In order to prove its correctness we make use of two technical lemmas (Lemma \ref{aux1} and Lemma \ref{biglemma}). Lemma \ref{biglemma} is particularly useful and gives a good characterisation of which vertices need to be activated. It is also crucial for the computation of the Edmonds-Gallai decompositions of the reduced graphs $H'_i$.
\section{Technical Lemmas}
Let $G = (\mathcal{A} \cup \mathcal{P}, \mathcal{E})$ be any bipartite graph. Then $e=(a,p)$ is said to be a {\em new edge for $G$} if $e \notin  \mathcal{E}$
and $a \in \mathcal{A}, p \in \mathcal{P}$. If $M$ is any matching of $G$, then a matching $N$ of $G$ is said to be {\em $M$-augmented maximum} if it is a maximum matching of $G$ obtained by augmenting $M$.

\begin{lemma} \label{aux1}
Let $G = (\mathcal{A} \cup \mathcal{P}, \mathcal{E})$ be a   bipartite graph, $M$ its maximum matching and $C$ a connected component of $G$ that contains exactly one free vertex $a_0$ of $\mathcal{A}$ in $M$.

Let $\mathcal{E}_1=\{(a_1,p_1), (a_2, p_2), \ldots (a_r, p_r)\}$  denote a set of new edges for $G$ such that  each   $a_i$  belongs to $C \cap E(G)$ and no $p_i$ belongs to $C$.  Let $G_1$  denote the graph $G \cup \mathcal{E}_1$ and $n_0=|M|$. Then we have:

\begin{enumerate}
\item  If there exists   $i$  such that $p_i \in E(G)$, then: 
\begin{enumerate}
\item \label{aux1a} Every $M$-augmented maximum matching of $G_1$ contains $n_0$ edges of $\mathcal{E}$ and one edge $(a_i, p_i) \in \mathcal{E}_1$ such that  $p_i \in E(G)$. Conversely, each edge $(a_i, p_i) \in \mathcal{E}_1$ such that  $p_i \in E(G)$ belongs to some  maximum matching of $G_1$. Thus, no edge  $(a_i, p_i) \in \mathcal{E}_1$ such that  $p_i \notin E(G)$ belongs to any $M$-augmented maximum matching of $G_1$.

\item \label{aux1d}(i) Each vertex of $G \setminus C$ either has the same type in $G$ and $G_1$ or (ii) it belongs to $E(G) \cup O(G)$ and $U(G_1)$.
Each vertex of $C$ belongs to $U(G_1)$ or $(E(G) \cap O(G_1)) \cup (O(G) \cap E(G_1))$.

\item \label{aux1e} No edge $(a,p)$ of $G$ such that one of its endpoints belongs to $O(G)$ and the other to $O(G) \cup U(G)$ belongs to any $M$-augmented maximum matching of $G_1$.
 \end{enumerate}

\item If there exists no  $i$ such that $p_i \in E(G)$, then:
\begin{enumerate}
\item  Every  maximum matching of $G$ is also a maximum matching of $G_1$.
Let $P'$ be any even length $M$-alternating path starting at $a_0$. Then $M \oplus P'$ is a maximum matching of $G_1$, which contains $n_0-1 $ edges of $\mathcal{E}$ and one edge of $\mathcal{E}_1$.

\item  Every edge $(a_j, p_j) \in \mathcal{E}_1$ belongs to some even length $M$-alternating path starting at $a_0$.

\item \label{aux2c}(i) Each vertex of $G \setminus C$ either has the same type in $G$ and $G_1$ or (ii) it belongs to $U(G)$ and $E(G_1) \cup O(G_1)$.
Each vertex of $U(G)$ that belongs also to $E(G_1) \cup O(G_1)$ is reachable in $G_1$  from $a_0$ by an even/odd length $M$-alternating path.
Each vertex of $C$  has the same type in $G$ and $G_1$.
\item \label{aux2d} An edge $(a,p)$ such that $a \in U(G) \cap E(G_1)$ and $p \in O(G)$ belongs to some maximum matching of $G_1$. Every other  edge $(a,p)$ of $G$ such that one of its endpoints belongs to $O(G)$ and the other to $O(G) \cup U(G)$ belongs to no maximum matching of $G_1$.
\end{enumerate}
\end{enumerate}
\end{lemma}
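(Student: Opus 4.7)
The plan is to analyse both $G$ and $G_1$ through their Edmonds-Gallai decompositions and to exploit bipartite parity. First, since $C$ is a connected component with a unique free $A$-vertex $a_0$ and (by maximality of $M$) no free $P$-vertex, we have $E(G) \cap C \subseteq A$ and $O(G) \cap C \subseteq P$; each vertex of $C$ lies in exactly one of $E$, $O$, $U$ according to alternating reachability from $a_0$ inside $C$. In particular every edge $(a_i,p_i) \in \mathcal{E}_1$ has $a_i \in A \cap C$ reachable from $a_0$ by an even-length $M$-alternating path staying in $C$.

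The structural core of the argument is a bipartite parity claim: any $M'$-augmenting path $P$ in $G_1$, for any maximum matching $M'$ of $G$, uses exactly one edge of $\mathcal{E}_1$. At least one such edge is used (else $P$ would witness that $M'$ is not maximum in $G$), and at most one: a path using two $\mathcal{E}_1$-edges would enter $C$ at some $a_{j_1} \in A$ via an unmatched $\mathcal{E}_1$-edge, traverse an alternating subpath inside $C$ of odd length (matched at both ends, so that the alternation fits with the entering and exiting unmatched $\mathcal{E}_1$-edges), and exit at another $a_{j_2} \in A$; but any path between two $A$-vertices in a bipartite graph has even length, a contradiction. As a corollary, one endpoint of every $M'$-augmenting path in $G_1$ must be $a_0$, and the outside-$C$ portion is an even-length alternating path from the used endpoint $p_j$ to a free $P$-vertex in its component, forcing $p_j \in E(G)$.

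Case 1 now follows quickly. For 1(b), concatenate an even-length alternating path from $a_0$ to $a_i$ inside $C$, the edge $(a_i,p_i)$, and an even-length alternating path from $p_i$ to a free $P$-vertex in its component; applying this augmenting path to $M$ yields a maximum matching of $G_1$ of size $n_0+1$ containing $(a_i,p_i)$. Combining the parity claim with the assumption that every maximum matching of $G_1$ is obtained by augmenting a maximum matching of $G$ yields 1(a), and the forcing $p_j \in E(G)$ for the used $\mathcal{E}_1$-endpoint yields 1(c). For 1(d) I compute the EG-decomposition of $G_1$ by noting that the augmenting path removes $a_0$ and one free $P$-vertex $p_0$ from the set of free vertices of $M$, then track reachability from the remaining free vertices: vertices outside $C$ either retain their type or (if their original reach relied on $p_0$) drop to $U(G_1)$, while vertices inside $C$ keep the same structural roles but flip parity, because reach now comes from outside $C$ via a single $\mathcal{E}_1$-edge. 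Claim 1(e) follows because the augmenting path uses only $OE$-edges of $G$ together with its single $\mathcal{E}_1$-edge, so no $OO$- or $OU$-edge of $G$ can lie in $N\setminus M'$ for any maximum matching $N$ of $G_1$.

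Case 2 is lighter. The parity claim rules out augmenting paths in $G_1$ entirely (the outside piece would require $p_j \in E(G)$), so $M$ remains maximum in $G_1$. The only change in the EG-decomposition is that alternating paths from $a_0$ may now be extended through an $\mathcal{E}_1$-edge into the component of some $p_i$, activating certain $U(G)$-vertices to $E(G_1)$ or $O(G_1)$ according to parity from $a_0$; inside $C$ and in components untouched by $\mathcal{E}_1$ the types are unchanged, which gives claim 2(a). For 2(b), place $(a,p)$ with $a \in U(G) \cap E(G_1)$ and $p \in O(G)$ in a maximum matching by taking any even-length $M$-alternating path $Q$ from $a_0$ to $a$ in $G_1$, extending through $(a,p)$ and the $M$-edge at $p$ to form an even-length alternating path $Q'$, and applying $M \oplus Q'$; the remaining edges described in 2(b) become $OO$- or $OU$-edges in the EG-decomposition of $G_1$ and so lie in no maximum matching. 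I expect the main obstacle to be the detailed bookkeeping for 1(d) and the corresponding parts of 2(a): the parity claim itself is short, but propagating its consequences through the parity flip inside $C$ in Case 1, the possible drop of outside-$C$ vertices to $U(G_1)$, and the activation wave from $a_0$ in Case 2 requires careful tracking of which free vertices remain and which alternating paths exist.
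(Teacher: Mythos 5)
Your central claim --- every augmenting path in $G_1$ with respect to a maximum matching of $G$ contains exactly one edge of $\mathcal{E}_1$, hence has $a_0$ as one endpoint, and its outside-$C$ tail forces the $p_j$-endpoint of that edge into $E(G)$ --- is exactly the parity argument the paper uses, and your derivations of 1(a), 1(b), 1(c), 1(e) and of Case 2 from it are sound. The genuine gap is in 1(d), precisely the part you defer as ``bookkeeping.'' Your stated reason for the parity flip inside $C$ (``reach now comes from outside $C$ via a single $\mathcal{E}_1$-edge'') does not by itself yield the flip: bipartite parity alone is neutral. If the crossing edge $(a_j,p_j)$ were traversed as a \emph{matched} edge of a maximum matching $M_1$ of $G_1$ --- which is not a priori excluded, since by 1(a) exactly one edge of $\mathcal{E}_1$ does lie in $M_1$ --- then $p_j$ would be reached at odd distance and $a_j$ at even distance, and the types inside $C$ would be \emph{preserved}, contradicting 1(d). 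The missing step is that on any $M_1$-alternating path from a free vertex outside $C$, the first $\mathcal{E}_1$-edge crossing into $C$ must be unmatched: otherwise the even-length prefix $P'$ of that path up to the crossing edge contains no other $\mathcal{E}_1$-edge, so $M_1\oplus P'$ is a matching of size $n_0+1$ contained entirely in $\mathcal{E}$, contradicting the maximality of $M$ in $G$. Only with this swap argument does ``$C$ is entered via an unmatched edge at odd distance'' follow, and with it the flip $E(G)\to O(G_1)$, $O(G)\to E(G_1)$ (or the drop to $U(G_1)$ when no such path reaches the vertex).

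The outside-$C$ half of 1(d) has a similar hole: ``track reachability from the remaining free vertices'' must also rule out that a vertex of $G\setminus C$ acquires the \emph{opposite} active type, and that a $U(G)$-vertex outside $C$ becomes reachable by a path that enters $C$ and leaves it again; the paper's cleaner route is to take a maximum matching of $G_1$ witnessing the type of $v$, delete its unique $\mathcal{E}_1$-edge, and read off $v$'s type from the resulting maximum matching of $G$. One further slip: maximality of $M$ does \emph{not} imply that a connected component containing a free $A$-vertex contains no free $P$-vertex (connectivity is weaker than alternating reachability), so your opening assertions that $E(G)\cap C\subseteq A$ and that $a_0$ is the only free vertex of $C$ are an extra hypothesis imported from the lemma's intended use (where $C$ is the alternating-reachable set of $a_0$), not a consequence of the stated assumptions; indeed 1(d) as written would fail if $C$ contained a free $P$-vertex, so this assumption must be made explicit rather than ``derived.''
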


\begin{figure}
\centering
  \begin{minipage}[b]{0.4\textwidth}
	\includegraphics[width=\textwidth ]{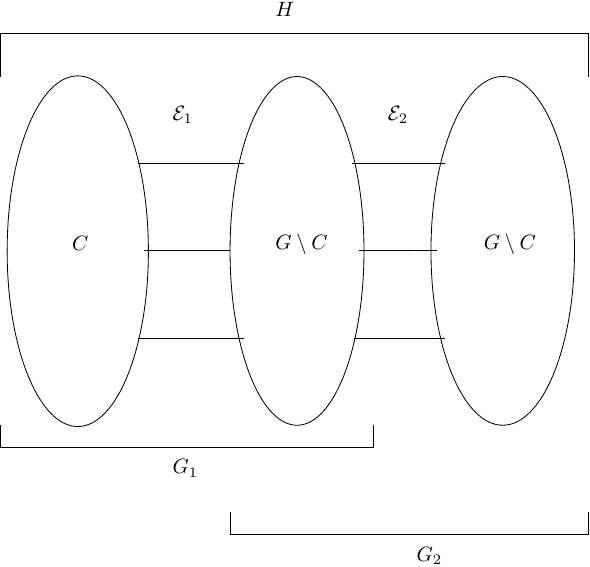}
	\end{minipage}
 \caption{The pictorial representation of $G$, $G_1$, $G_2$ and $H$ as described in Lemma \ref{aux1} and Lemma \ref{biglemma}}
\end{figure}

\begin{proof}
We first prove $1(a)-1(c)$. Let us assume that there exists $i$ such that $p_i \in E(G)$.

From the fact that $a_i, p_i \in E(G)$ and Lemma \ref{EG}, we can see that in $G$ there exists an $M$-alternating path $P_1$ from a free vertex  to $a_i$ and an $M$-alternating path $P_2$ from a free vertex to $p_i$. 
Since $a_i \in C$, the path $P_1$ is contained in $C$ and hence $P_1$ is between $a_0$ and $a_i$. Similarly, the path $P_2$ is contained  in $V \setminus C$. Thus paths $P_1$ and $P_2$ along with the edge $(a_i, p_i)$ form an $M$-augmenting path $P'$ in $G_1$. %It can be easily seen  that vertices of $P_1$ are contained in $C$ (i. e. $V(P_1) \subseteq C$). Similarly we have $V(P_2) \subseteq (A \cup P) \setminus C$.
 Let $M' = M \oplus P'$. Obviously $M'$ contains $n_0$ edges of $\mathcal{E}$ and one edge of $\mathcal{E}_1$. 

\textbf{{\em Observation:}} We can note that no $M$-alternating path can contain two edges of $\mathcal{E}_1$. 

This is because all endpoints of $\mathcal{E}_1$ belonging to $C$ are contained in $\mathcal{A}$. Therefore an $M$-alternating path
connecting  two endpoints  of the edges of $\mathcal{E}_1$ in $C$ must have even length. But every edge of $\mathcal{E}_1$ is non-matching. Thus we cannot include any two of them in an $M$-alternating path.

We now prove that $M'$ is a maximum matching of $G_1$. Assume by contradiction that there exists a matching $M''$ such that $|M''| = |M| + 2$. The symmetric difference $M'' \oplus M$ contains two vertex disjoint $M$-augmenting paths in $G_1$. At least one of these paths has both endpoints in $(\mathcal{A} \cup \mathcal{P}) \setminus C$. Let this path be $X$. The path $X$ has to contain at least two edges of $\mathcal{E}_1$ as otherwise it would be contained in the graph $G$, contradicting the maximality of $M$. However, by the observation above we know that it is not possible. By the same observation any $M$-augmenting path contains at most one edge of $\mathcal{E}_1$, which proves $1(a)$.

Note that from the above discussion, it already follows that  each edge $(a_i, p_i) \in \mathcal{E}_1$ such that  $p_i \in E(G)$ belongs to some  maximum matching of $G_1$ and that  no edge  $(a_i, p_i) \in \mathcal{E}_1$ such that  $p_i \notin E(G)$ belongs to any $M$-augmented maximum matching of $G_1$. 

%Now we prove $1(c)$. Assume by contradiction that some $(a_j, p_j)$ belongs to a maximum matching $M_1$ and that $p_j \in O(G) \cup U(G)$. Consider an $M$-augmenting path $P$ belonging to $M \oplus M_1$. From $1(a)$, it follows that the path $P$ can be split into subpaths $(P_1, (a_j, p_j), P_2)$, where $P_1$ is contained in $C$ and $P_2$ in $A \cup P \setminus C$. The path $P_2$ is an even length $M$-alternating path in $G$ from a free vertex, ending at $p_j$, contradicting the fact that $p_j \in O(G) \cup U(G)$. Thus $1(c)$ holds.

$1(b)$. Let us consider the case $v \in C$ first. It suffices to show that $v \in E(G_1)$ implies $v \in O(G)$ and that $v \in O(G_1)$ implies $v \in E(G)$. Let $v \in O(G_1)$. Let us consider a maximum matching $M_1$ in $G_1$ and any $M_1$-alternating path $P$ from a free vertex $v_0 \in (\mathcal{A} \cup \mathcal{P}) \setminus C$ to $v$ crossing the set $\mathcal{E}_1$. Let $(x, y)$ be the first edge of $\mathcal{E}_1$ on this path and $x \notin C$, $y \in C$. Our goal is to show that $(x, y) \notin M_1$. Assume by contradiction that $(x, y) \in M_1$. Let $P'$ be a subpath of $P$ from a free vertex $v_0$ to $y$. Note that $P'$ has even length and thus $M_1 \oplus P'$ is a maximum matching in $G$ of size $|M| + 1$ - a contradiction. Hence, $(x, y) \notin M_1$ and we have $y \in O(G_1)$. From the definition of $\mathcal{E}_1$, we also have $y \in E(G)$. Since $y \in O(G_1) \cap E(G)$, $v \in O(G_1)$ and there is an alternating path from $y$ to $v$, one can easily see that $v \in E(G)$. Similarly, if $v \in C$ belongs to $E(G_1)$ then it belongs to $O(G)$. This completes the proof of $1(d)$ for the case when $v \in C$.

Suppose next that $v \notin C$. Without loss of generality, we may assume that there exists an $M$-augmenting path $P_1$ in $G_1$  that contains an edge $(a_i,p_i) \in \mathcal{E}_1$ and $p_i$ is free in $M$. This is because, we can apply an even-length $M$-alternating path from a free vertex to $p_i$ and obtain a maximum matching $M'$ of $G$, in which $p_i$ is free.  Let $M_1=M \oplus P_1$. Note that each edge $e \in M_1$ with both endpoints in $V \setminus C$ belongs also to $M$. Let us observe that any $M_1$-alternating path $P'$ that has both endpoints in $V \setminus C$ and crosses $C$ ($V(P') \cap V(C)\neq \emptyset$) contains two edges of $\mathcal{E}_1$, exactly one of which belongs to $M_1$. This follows from the observation in $1(a)$. Suppose that a vertex $v \in V \setminus C$ belongs to $E(G_1)$. There exists then an even-length $M_1$-alternating path $P(v)$ starting at a free vertex $v'$ in $M_1$ and ending at $v$. The free vertex $v'$ must belong to $V \setminus C$. If $P(v)$ does not go through any vertex of $C$, then it is also $M$-alternating and hence $v \in E(G)$. Suppose then that $P(v)$ goes through $C$. It contains then  the edges $(a_i, p_i)$ and $(a_j, p_j)$ of $\mathcal{E}_1$. We observe that the path $P(v)$ must leave the part $V
\setminus C$ via a non-matching edge, hence by $(p_j,a_j)$. This follows from the fact that otherwise it would have to leave $(V \setminus C)$ via $p_i$, but $p_i$ is free in $M$, which would contradict the maximality of $M$.
We split $P(v)$ into three parts $P_1(v), P_2(v), P_3(v)$, where $P_1(v)$ has endpoints $v'$ and $p_j$, $P_2(v)$ - $p_j$ and $p_i$ and $P_3(v)$ - $p_i$ and $v$. Any $M_1$-alternating path with both endpoints in $\mathcal{P}$ (posts) must have even length.
Therefore each of the paths $P_1(v), P_2(v), P_3(v)$ has even length. We notice that $P_3(v)$ is an even length $M$-alternating path from a free vertex $p_i$ to $v$, which means that $v \in E(G)$. The proof for the case when  $v \in V \setminus C$ belongs to $O(G_1)$ is symmetric.

$1(c)$. Assume by contradiction that there exists an edge $(a, p)$ of $G$ such that $a \in O(G)$, $p \in O(G) \cup U(G)$ and $(a, p)$ belongs to a maximum matching $M_1$ of $G_1$. Consider a matching $M' = M_1 \cap \mathcal{E}$. We have $(a, p) \in M'$, $|M'| = |M_1| - 1$, and From $1(a)$, $M'$ is a maximum matching of $G$. However from Lemma \ref{EG} it follows that $(a, p)$ does not belong to any maximum matching of $G$ - a contradiction.

Let us now prove that $2(a)-2(d)$ hold. Assume that there exists no $i$ such that $p_i \in E(G)$.

$2(a)$. We first show that $M$ remains a maximum matching of $G_1$. Assume by contradiction that there exists an $M$-augmenting path $P$ in $G_1$. Let $x$ be the endpoint of the path not belonging to $C$ and let $(a_j, p_j)$ be the first edge of $P$ going from $x$ belonging to $\mathcal{E}_1$. The existence of a subpath from $x$ to $p_j$ implies that $p_j \in E(G)$ - a contradiction. Thus every maximum matching of $G$ is also a maximum matching of $G_1$. The fact that a maximum matching of $G_1$ can contain at most one edge of $\mathcal{E}_1$ follows from the observation in the proof of $1(a)$.

$2(b)$. Let $(a_j, p_j)$ be any edge of $\mathcal{E}_1$. Since $a_j \in E(G)$, there exists an $M$-alternating path $P$ in $G$ from a free vertex $a_0$ to $a_j$. Note that $p_j \notin E(G)$ and Lemma \ref{EG} imply that $p_j$ is matched in $M$ to some $M(p_j) \neq a_j$. The path $P$ together with edges $(a_j, p_j)$ and $(p_j, M(p_j))$ form an even length $M$-alternating path $P'$ in $G_1$. Clearly $M \oplus P'$ is a maximum matching in $G_1$ containing the edge $(a_j, p_j)$, thus $2(b)$ holds.

$2(c)$. Recall that $M$ is a maximum matching in both $G$ and $G_1$, where $G$ is a subgraph of $G_1$. Let $v$ be an even(\emph{resp.} odd) vertex in $G$. There exists an even(\emph{resp.} odd) length $M$-alternating path in $G$ from a free vertex to $v$. Such a path is present in $G_1$, as $G$ is a subgraph of $G_1$. Hence, $v$ is an even (\emph{resp.} odd) vertex in $G_1$. 
Let us have $v \in U(G)$ and $v \in E(G_1) \cup O(G_1)$. From Lemma \ref{EG} there exists an $M$-alternating path $P$ from a free vertex $x$ to $v$. We know that $v \in U(G)$ thus at least one edge of $\mathcal{E}_1$ belongs to $P$. By the observation in the proof of $1(a)$, it follows that exactly one edge of $\mathcal{E}_1$ belongs to $P$. Note that this also means  that $P$ starts from the only free vertex $a_0$ of $C$, thus $2(c)$ holds.

%Every vertex of $C$ is from definition reachable in $G$ from a free vertex $a_0$ belonging to $C$ by an $M$-alternating path. Obviously each such  path is also  $M$-alternating  in $G_1$. This fact and Lemma \ref{EG} imply that types of vertices belonging to $C$ are identical in both $G$ and $G_1$. Let us assume that there exists a vertex $v \notin C$ such that its types in $G$ and $G_1$ differ. Note that if it was $v \in E(G) \cup O(G)$ then there would be an $M$-alternating path in $G$ from a free vertex belonging to $A \cup P \setminus C$ to $v$. Such  path would also remain an $M$-alternating path in $G_1$, contradicting the assumption that types of $v$ in $G$ and $G_1$ differ. Thus we have $v \in U(G)$ and $v \in E(G_1) \cup O(G_1)$. From Lemma \ref{EG} there exists an $M$-alternating path $P$ from a free vertex $x$ to $v$. We know that $v \in U(G)$ thus at least one edge of $\mathcal{E}_1$ belongs to $P$. By the observation in the proof of $1(a)$ it follows that exactly one edge of $\mathcal{E}_1$ belongs to $P$. Note that this also means  that $P$ starts in the only free vertex of $C$, thus $2(c)$ holds.
 
$2(d)$. Let $(a,p)$ be an edge such that $a \in U(G) \cup E(G_1)$ and $p \in O(G)$. Hence, $a$ is reachable from a free vertex $a_0$ in $C$ by an even length $M$-alternating path. Therefore, $p$ is reachable from $a_0$ by an odd length $M$-alternating path in $G_1$. If we apply the alternating path, then $(a,p)$ is matched in some maximum matching of $G_1$.

Now let $(a,p)$ be an edge such that $a \in (U(G) \cup O(G)) \cap (O(G_1) \cup U(G_1))$ and $p \in O(G)$. Hence, $p \in O(G_1)$. Thus, the edge $(a,p)$ is an $OO$ or $OU$  edge in $G_1$. Therefore, the edge $(a,p)$ is never matched in any maximum matching of $G_1$. \qed
\end{proof}

We say that a graph $G$ is {\em reduced} if it does not contain any edge  $(u,v)$ such that either both $u$ and $v$ belong to $O(G)$ or exactly one of the vertices belongs to $U(G)$ and the other one to $O(G)$.

Let $G,G_1,G_2, H$ be graphs such that $G=(V,  \mathcal{E}), G_1=(V,  \mathcal{E} \cup \mathcal{E}_1)), G_2=(V,  \mathcal{E} \cup\mathcal{E}_2)$ and $H= (V,  \mathcal{E}\cup\mathcal{E}_1 \cup\mathcal{E}_2)$. Let $M$ be a maximum matching of $G$. 
For $i \in \{1,2\}$, we say that a matching  $M'$ of $H$ is {\em $(M, G_i)$-augmented} if it is obtained by augmenting an $M$-augmented maximum matching of $G_i$.

\begin{lemma}{\label{biglemma}}
Let $G = (\mathcal{A} \cup \mathcal{P}, \mathcal{E})$ be a reduced  bipartite graph and $C$ a connected component of $G$ that contains exactly one free vertex $a_0$ of $\mathcal{A}$ in a maximum matching $M$ of $G$. 

Let $\mathcal{E}_1=\{(a_1,p_1), (a_2, p_2), \ldots (a_r, p_r)\}$ and $\mathcal{E}_2$ denote two sets of new edges for $G$ such that  each  endpoint of  a new edge belongs to $E(G)$. Also, each edge of  $\mathcal{E}_1$ connects a vertex of $C \cap \mathcal{A}$ with a vertex not contained in $C$ and each edge of  $\mathcal{E}_2$ connects two vertices not belonging to $C$.  Let $G_1, G_2$ and $H$  denote respectively $G \cup \mathcal{E}_1, G \cup \mathcal{E}_2$ and $G \cup \mathcal{E}_1 \cup \mathcal{E}_2$.  $M_{12}$ denotes a set of maximum $(M, G_1)$-augmented matchings of $H$  and  $M_{21}$ a set of maximum $(M, G_2)$-augmented matchings of $H$.
 Let $n_0=|M| $ and  $n_2$ denote the number of edges of $\mathcal{E}_2$ contained in a maximum $M$-augmented matching of $G_2$.  Then we have:
	
\begin{enumerate}
\item {\label{biglemma1}} If there exists   $i$  such that $p_i \in E(G_2)$, then
 $M_{12} = M_{21}$ and each matching of  $M_{12}$ contains $n_0$ edges of $\mathcal{E}$, $n_2$ edges of $\mathcal{E}_2$ and one edge of $\mathcal{E}_1$.

\item If there exists no  $i$ such that $p_i \in E(G_2)$, then:
\begin{enumerate}
\item {\label{biglemma4a}} 
A  matching of $M_{12}$ contains $n_0$ edges of $\mathcal{E}$, $n_2-1$ edges of $\mathcal{E}_2$ and one edge of $\mathcal{E}_1$. On the other hand, every   matching of $M_{21}$ is a maximum matching of $G_2$. 

\item {\label{biglemma4b}} 
A  matching of $H$ belongs to $M_{12}$ if and only if it has the form $M' \oplus P'$, where $M' \in M_{21}$ and $P'$ is an even length $M'$-alternating path with one endpoint in $a_0$ and the other in a vertex  $v \in E(G) \cap E(G_1)$, which contains exactly one edge of $\mathcal{E}_1$.

\end{enumerate}
\end{enumerate}

\end{lemma}

\begin{proof}

$(1)$ Let $M_2$ be a maximum matching of $G_2$ and  $(a_i ,p_i) \in \mathcal{E}_1$ an edge with both endpoints in $E(G_2)$. After the addition of  $\mathcal{E}_1$ to $G_2$, $H$ contains  an $M_2$-augmenting path $P'$  containing $(a_i, p_i)$. The path has one edge from $\mathcal{E}_1$, some even length path segments consisting of the edges from $\mathcal{E}$ and some edges from $\mathcal{E}_2$.
Since very augmenting path has an odd number of edges, the number of edges  $\mathcal{E}_2$ in $P'$ must be even. Because every two edges from $\mathcal{E}_2$ are separated by an even length path segment consisting of edges from $\mathcal{E}$, an edge of $M_2 \cap \mathcal{E}_2$is followed  by an edge $\mathcal{E}_2 \setminus M_2$ and vice versa. Hence, half of the edges of $\mathcal{E}_2$ in $P'$ belongs to $M_2$. Therefore, after the application of $P'$ to $M_2$, the number of matched edges of $\mathcal{E}$ and $\mathcal{E}_2$ remains the same. The number of matched edges of $\mathcal{E}_1$ increases to $1$. By the construction every such matching belongs to $M_{21}$ and contains $n_0$ edges of $\mathcal{E}$, $n_2$ edges of $\mathcal{E}_2$ and one edge of $\mathcal{E}_1$.

Next we show that $M_{21}= M_{12}$.  Let $N \in M_{21}$. Consider the symmetric difference $N \oplus M$. It contains $n_2+1$ $M$-augmenting paths, one of which, say $P'$ contains an edge of $\mathcal{E}_1$ and each of the remaining ones one edge of $\mathcal{E}_2$. Notice that
order of applying the $M$-augmenting paths of $M \oplus N$ to $M$ is inconsequential  - we will always get $N$. Since we can first apply $P'$
and afterwards the remaining augmenting paths, it means that $N \in M_{12}$.

Similarly, it can be shown that $M_{12} \subseteq M_{21}$. We  conclude that $M_{12} = M_{21}$.

$2(a)$ There are two ways of obtaining a maximum matching $H$ by augmenting a maximum matching $M$ of $G$. We can first add $\mathcal{E}_2$ to the graph $G$ and augment $M$ in the thus built $G_2$, getting a maximum matching $M_2$ of  $G_2$.  After the addition of  $\mathcal{E}_1$ to $G_2$, there does not exist any edge $(a_i, p_i) \in \mathcal{E}_1$ such that $p_i \in E(G_2)$. Since we do not have any  $M_2$-augmenting path in $H$, the maximum matching of $G_2$ is a maximum matching of $H$.  This shows that any matching of $M_{21}$ is a maximum matching of $G_2$.

 Alternatively, we can first augment $M$ in $G_1$.  Since both endpoints of each edge of $\mathcal{E}_1$ belong to $E(G)$, each such edge  is contained in an  $M$-augmenting path in $G_1$. If we apply any of these $M$-augmenting paths, we get a maximum matching $M_1$ of $G_1$. Note that we can apply only one such path, because $C$ contains only one free vertex in $M$.  $M_1$ contains $1$ edge from $\mathcal{E}_1$ and $n_0$ edges from $\mathcal{E}$. 

Next, we add $\mathcal{E}_2$ and augment $M_1$. We know that none of the vertices of $C$ is free in $M_1$. Therefore there does not exist any $M_1$-augmenting path in $H$ starting from a vertex of $C$. Hence, we have two types of augmenting paths. The first possibility is that the augmenting path is totally contained in $G_2$. The other one is that the augmenting path contains one matched edge and one unmatched edge from $\mathcal{E}_1$. Recall that $C$ can be reached only by edges of $\mathcal{E}_1$. Thus, each $M_1$-augmenting path in $H$ increases the number of the matched edges from $\mathcal{E}_2$ by exactly $1$ and does not change the number of matched edges of $\mathcal{E}_1$ or of $\mathcal{E}$. We already know that the size of a maximum matching of $H$ is $n_0 + n_2$.  Hence,  any matching of $M_{12}$  contains $1$ edge from $\mathcal{E}_1$, $n_0$ edges from $\mathcal{E}$ and $n_2-1$ edges from $\mathcal{E}_2$.

$2(b)$. Consider a symmetric difference of two matchings $N \in M_{12}$ and $M' \in M_{21}$. Since $a_0$ is matched in $N$ and unmatched in $M'$ and because both matchings are maximum in $H$,
$N \oplus M'$  contains an even length $M'$-alternating path $P'$  starting at $a_0$. Observe that in any alternating path or cycle of $N \oplus M'$, any two edges not belonging to $G$ are separated by an even length path segment consisting of edges from $\mathcal{E}$. (Because all edges not contained in $G$ have their endpoints in $E(G)$ and hence all endpoints of such edges incident to one connected component of $G$ are either all contained in $\mathcal{A}$ or all contained in $\mathcal{P}$.) Thus,
any two edges not belonging to $G$ must alternate between edges of $N$ and edges of $M'$ on any alternating path or cycle.
$N \oplus M'$ contains exactly one edge of $\mathcal{E}_1$, which is necessarily contained in $P'$. $P'$ contains also some number of edges
of $\mathcal{E}_2$. 

We claim that there exists a number $2k+1$ such that $P'$ has $k+1$ edges of $N \cap \mathcal{E}_2$ and $k$ edges of 
$M' \cap \mathcal{E}_2$. To prove it, note that any maximal under inclusion alternating path of $M \oplus N$ apart from $P'$ has the same number of edges of $\mathcal{E}_2$ in $M'$ and in $N$. The same applies to any alternating cycle of $N \oplus M'$. On the other hand, we know
that the number of edges of $ \mathcal{E}_2$ in $M'$ is smaller by $1$ than in $N$. Therefore, $P'$ must indeed contain an odd number
of edges of  $\mathcal{E}_2$ and $M' \oplus N$ yields a matching of $M_{12}$. 

Next, we prove that the other endpoint of $P'$ belongs to $E(G) \cap E(G_1)$. Because $P'$ has even length, the other endpoint $a'$ of $P'$
must belong to $\mathcal{A}$. Let $(a_j, p_j)$ denote an edge of  $\mathcal{E}_1$ contained in $P'$. Since $p_j \in E(G)$, $p_j$ is contained in a component of $G$, which contains a free vertex of $\mathcal{P}$ in $M$. The path $P'$ goes between components of $G$, in which a free vertex in $M$ 
belongs alternately to $\mathcal{P}$ and to $\mathcal{A}$. Because $P'$ contains one edge of  $\mathcal{E}_1$ and an odd number of edges of  $\mathcal{E}_2$, 
it ends in a component with a free vertex in $M$ belonging to $\mathcal{A}$. This component is, of course, different from $C$. This shows that $a' \in E(G)$. To see that $a'$ belongs also to $E(G_1)$,
notice that no endpoint of  $\mathcal{E}_2$ belongs to a connected component in $G$ different from $C$ with a free vertex of $\mathcal{A}$ in $M$. 
Therefore for every such component $C' \neq C$ with a free vertex in $\mathcal{A} \cap M$, it holds that any vertex $v \in C'$ has the same type in $G$ and $G_1$.

Conversely, let $P'$ be any even length $M'$-alternating path with one endpoint in $a_0$, the other in $a' \in E(G) \cap E(G_1)$ and containing exactly one edge of  $\mathcal{E}_1$. By the same reasoning as above, we get that $P'$ contains an odd number $2k+1$ of edges of
 $\mathcal{E}_2$, $k+1$ of which belong to $N$. This means that $N \oplus P'$ has $n_0$ edges of  $\mathcal{E}$, one edge of $ \mathcal{E}_1$
and $n_2-1$ edges of  $\mathcal{E}_2$. Also, if we remove from $P'$ edges of $M' \cap  \mathcal{E}_2$, we obtain $n_2$ vertex-disjoint
$M$-augmenting paths in $H$, one of which contains an edge of $\mathcal{E}_1$.
Therefore, $N \oplus P'$ belongs to $M_{12}$. \qed
\end{proof}

\section{Algorithm for Updating a Rank-Maximal Matching} \label{alg}

In this section we present an algorithm for computing a rank-maximal matching of $H$. Its pseudocode is written as Algorithm \ref{main}.
In Algorithm \ref{main} for each $1 \leq i \leq r$, a matching $M_i$ denotes a rank-maximal matching of $G_i$. Also, for each $r \geq j>i$ a matching $M_i$ is contained in $M_j$. 

By phase $i$ of Algorithm \ref{main}, we mean an $i$-th iteration of the loop {\bf for}. By $C_i$ and $R_i$ we denote $C$ or $R$, respectively, at the beginning of phase $i$. By phase $0$ we denote the part of Algorithm \ref{main} before the start of phase $1$.
Depending on whether  during phase $i$ lines 7-13  or 15-19 are carried out,  the   phase is either called  {\em augmenting} or  {\em non-augmenting}.

We say that a vertex $v$ is {\em active} in $G'_i$ if  $v \in O(G'_i) \cup E(G'_i)$ and not active or {\em inactive} (or unreachable) otherwise.

In Algorithm \ref{main} the subgraph $C_i$ contains an "upper" part of the alternating subgraph $T$ mentioned in the introduction and in Section \ref{aug}, i.e., at the end of the algorithm $T$ contains all alternating paths, whose application to $M_r$ results in a rank-maximal matching of $H$. The subgraph $C_i$ may also be viewed as the subgraph that encompasses the ("positive") changes between graphs $H'_i$ and $G'_i$. $C_i$ always contains a new vertex $a_0$ that belongs to $H$ and not to $G$ as well as vertices that are at this point unreachable in $G$, i.e., they belong to $U(G'_i)$. If there exist vertices that are active in $H'_i$ but inactive in $G'_i$, then they belong to $C_i$.
Also, any vertex of $C_i$ is active in some graph $H'_j$ such that $j \leq i$ but inactive in $G'_j$. Any edge that belongs to $H'_i \setminus G'_i$ is either contained in $C_i$ or one of its endpoints belongs to $C_i$. Each such edge belongs to the set $AE$ at some point and is called an {\em activated edge}.
The subgraph $C_i$ does not encompass all changes - in particular, it may happen that some vertex $v \notin C_i$ is active in $G'_i$ but unreachable in $H'_i$
or that some edge belongs to $G'_i \setminus C_i$ but not to $H'_i$.

During phase $i$ we construct a graph $\tilde{H}_i$ that contains every edge belonging to some rank-maximal matching of $H_i$. 
At the beginning of phase $i$, the set $AV$ contains {\em activated vertices}, each of which is alive in $H'_i$ but not alive in $G'_i$.

We later show in Theorem \ref{cor} that a rank-maximal matching of $H_i$ may be obtained from a rank-maximal matching $M_i$ by the application of any alternating path
$s_i \in S_i$,  defined below, and that every matching obtained in this way is rank-maximal. The paths $s_i$ of $S_i$ are defined as follows.

Let $AV_i$ denote the set of activated vertices at the end of  phase $i$.

\begin{definition} \label{sipaths}
For each $i \in \{1,2, \ldots, r\}$ we define the set $S_i$ of $M_i$-alternating paths contained in $\tilde{H}_i$, each of which starts at $a_0$.

If  phase $i$ is augmenting, then 
 $s_i \in S_i$ iff  it is an $M_i$-augmenting path ending at any free vertex in $M_i$. (Each such path contains one edge of $AE$.)

Otherwise, if phase $i$ is non-augmenting, $s_i \in S_i$ iff  (i) it is an $M_i$-alternating path ending at any  vertex of $AV_i$ (each such path contains exactly one edge of $AE_u$ or is a path of length $0$) or (ii) it is an $M_i$-alternating path ending at any  vertex of $Alive(i)$ (each such path contains exactly one edge of $AE_o$). 
\end{definition}

The mentioned earlier alternating subgraph $T$ is formed by paths of $S_r$, i.e., $T= \bigcup_{s_r \in S_r} s_r$. Before phase $r$ (or strictly speaking, before phase $c'$), we are not sure which paths of $S_i$ will eventually belong to $T$, therefore some  paths
of $S_i$ are contained only partially in $C$ and thus also only partially in $T$.
%Example of how Algorithm \ref{main} works is given in Appendix \ref{example}.

%\vspace{1cm}

\begin{algorithm} 
\caption{for computing a rank-maximal matching of $H$ }
\label{main}
\begin{algorithmic}[1]
\State $C \leftarrow \{a_0\}$, $AV \leftarrow \{a_0\}$, $AE \leftarrow \emptyset$

\For {$i = 1, 2, \ldots, r$}
	\State $R \leftarrow G'_i \setminus C$
	\State $AE \leftarrow AE \cup \{(a,p)\in  F_i: a\in AV \wedge p \in Alive(i)\}$
	\State $\tilde{H}_i \leftarrow C \cup R \cup AE$

	\If {each $(a,p) \in AE$ is such that $p \in U(G'_i) \cup O(G'_i)$}
		
		\State $AE_u \leftarrow \{(a,p) \in AE: \ p \in U(G'_i)\}$
		
		\ForAll{  even-length $M_i$-alt. path $S$ in $\tilde{H}_i$ between $a_0$ and $a \in U(G'_i) \cap Alive(i)$}
		\State $V(C) \leftarrow V(C) \cup V(S), \ \   \mathcal{E}(C) \leftarrow \mathcal{E}(C) \cup \{(a,p) \in G'_i: a, p \in S\}$
		%\If {$a \in Alive(i)$ }
		   \State $AV \leftarrow AV \cup \{a\}$
		%\EndIf
		\EndFor
		\State $AE_o \leftarrow \{(a,p) \in AE \cup G'_i: a \in C \wedge \ p \in O(G'_i)\}$
		\State $AE \leftarrow AE \cup AE_o \setminus AE_u$
	  \Else \State (there exists $(a,p) \in AE$  such that $p \in E(G'_i)$)
	  \ForAll{ $(a,p) \in AE$ such that $p \in O(G'_i) \cup U(G'_i)$ }
	  		\State $AE \leftarrow AE \setminus \{(a,p)\}$
	  \EndFor	
	  \State $AV \leftarrow \emptyset$	
     \EndIf
\EndFor
\State return $M_r \oplus s^*_r$, where $s^*_r$ - a path of $S_r$ of minimal length
\end{algorithmic}
\end{algorithm}

In order to obtain a rank-maximal matching $N_r$ that differs from $M_r$ in the smallest possible way, we choose that path $s_r \in S_r$,  which is shortest. \\

\begin{theorem} \label{cor}
For each $i \in \{1,2, \ldots, r\}$ it holds:
\begin{enumerate}

\item $\tilde{H}_i$ contains every edge belonging to some rank-maximal matching of $H_i$.
\item $C_i$ has the properties of $C$ from Lemma \ref{aux1} with respect to the matching $M_i$. $C_i$ contains no vertex that is active
in $G'_i$.  After the execution of line $4$ of phase $i$ each edge of $AE$ connects a vertex of $C_i$ with a vertex of $R_i$.
\item \label{cor3} For each $s_i \in S_i$ a matching $M_i \oplus s_i$ is a rank-maximal matching of $H_i$ and a maximum matching of $\tilde{H}_i$.
%\item  Each rank-maximal matching $N_i$ of $H_i$ is such that $M_i \oplus N_i$ contains one of the paths $s_i \in S_i$.
\item At the end of phase $i$ the set $AV$ consists of all vertices that are alive in $H'_{i+1}$ but not alive in $G'_{i+1}$.
\end{enumerate}

\end{theorem}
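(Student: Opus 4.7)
The plan is to prove all five properties simultaneously by induction on the phase index $i$. The base case (end of phase $0$, i.e., before phase $1$ begins) is immediate: $C = \{a_0\}$, $AV = \{a_0\}$, $AE = \emptyset$, and since $a_0$ is isolated in $G$, it is unreachable in every $G'_j$ while being the only vertex active in $H'_1 \setminus G'_1$. Property 2 holds trivially because $C$ consists of the single free vertex $a_0$, and properties 3--4 are vacuous before the first phase.

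For the inductive step, assume the five properties hold at the end of phase $i-1$. I will first verify that after line 6 the set $AE$ has the structural form required to invoke Lemma \ref{aux1}/\ref{biglemma}: using property 5 of the induction hypothesis, every endpoint $a \in AV$ is active in $H'_i$ but not in $G'_i$, hence lies in $C_i$; and every edge added to $AE$ in line 6 has its $P$-endpoint alive in $G'_i$. I then split into the two cases dictated by the \textbf{if} in line 8.

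\emph{Non-augmenting phase.} Here every activated edge connects $C_i$ to a vertex of $O(G'_i) \cup U(G'_i)$. I apply Lemma \ref{aux1}, case 2 (no $p_i \in E(G)$), taking its $G$ to be $R_i = G'_i \setminus C_i$ together with the part of $C_i$ lying in $C_i \cap G'_i$, its component $C$ to be $C_i$, and $\mathcal{E}_1 = AE$. The lemma tells me exactly which vertices of $U(G'_i)$ change to $E$ or $O$ in $H'_i$, namely those reachable from $a_0$ through the alternating paths iterated in the \textbf{forall} of line 10; the update $C \leftarrow C \cup S$ records these vertices, and the definition of $AE_o$ in line 13 installs the edges that become activated for the next phase. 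This establishes properties 2 and 5 at the end of phase $i$. Property 1 then follows from the Algorithm \ref{alg1} invariants together with the fact that every edge not in $\tilde{H}_i$ either lies in a deleted $OO$/$OU$ edge in $G'_i$ or is incident to a dead vertex in $H'_i$. Properties 3 and 4 follow from the characterization of maximum matchings of $C \cup R \cup AE$ given by Lemma \ref{aux1}(\ref{aux1e}) and (\ref{aux2d}): the $M_i$-alternating paths of $S_i$ are in bijection with the ways of rerouting $M_i$ into a maximum matching of $\tilde{H}_i$.

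\emph{Augmenting phase.} Some $(a,p) \in AE$ has $p \in E(G'_i)$, so Lemma \ref{aux1}(\ref{aux1a}) guarantees that a maximum matching of $\tilde{H}_i$ has size $|M_i| + 1$. Here the key tool is Lemma \ref{biglemma}: I use $\mathcal{E}_1$ for the edges of $AE$ whose $P$-endpoint lies in $E(G'_i)$ and $\mathcal{E}_2$ for the remaining edges added in phase $i$ between even vertices of $R_i$. Applying case 1 of Lemma \ref{biglemma} yields that every rank-maximal matching of $H_i$ uses exactly $n_0$ edges from $\mathcal{E}$ and one edge of $\mathcal{E}_1$, which is equivalent to saying $N_i = M_i \oplus s_i$ for some $M_i$-augmenting path $s_i$ starting at $a_0$ and using a single edge of $AE$ — precisely the set $S_i$. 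This gives properties 3 and 4. Setting $AV \leftarrow \emptyset$ and discarding the stale $AE$ entries in lines 16--19 is consistent with property 5 for phase $i+1$, because any vertex whose activated status would have propagated is now saturated by the augmentation.

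The principal obstacle is the subtle interaction handled by Lemma \ref{biglemma}(\ref{biglemma4a})--(\ref{biglemma4b}) across \emph{mixed} phases, where some activated edges carried in $AE_o$ from earlier non-augmenting phases must combine correctly with new rank-$i$ edges. Concretely, I need to argue that an alternating path $s_i \in S_i$ uses exactly one edge of $AE$ overall, despite $AE$ potentially containing activated edges of several previous ranks; this is where the constraint in line 10 of containing \emph{exactly one} edge of $(R \cap E_i) \setminus AE$ together with the component-level structure preserved by property 2 becomes crucial. Once this bookkeeping is in place, properties 1--5 propagate cleanly, and the final statement about $N_r = M_r \oplus s_r$ (with $s_r$ chosen to be shortest) being a rank-maximal matching of $H$ follows from property 3 applied at $i = r$.
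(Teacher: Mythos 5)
Your overall strategy---induction on the phase, with a case split on augmenting versus non-augmenting phases and appeals to Lemmas \ref{aux1} and \ref{biglemma}---matches the paper's. But there is a genuine gap at the heart of the inductive step, and you flag it yourself without closing it. The crux of point 3 is not merely that $M_i \oplus s_i$ is a \emph{maximum} matching of $\tilde{H}_i$ with the right count of rank-$i$ edges; it is that $M_i \oplus s_i$ \emph{contains a rank-maximal matching of $H_{i-1}$}, which is what lets one conclude rank-maximality in $H_i$ (this is the Claim the paper opens its proof with). Establishing this requires relating $S_i$ to $S_{i-1}$: the paper does a four-way case analysis on whether each of phases $i-1$ and $i$ is augmenting, and in each case decomposes a path $T \in S_i$ into a maximal prefix $T'$ lying in $\tilde{H}_{i-1}$ (to which the induction hypothesis applies, giving a rank-maximal matching of $H_{i-1}$) and a suffix $T''$ lying in $G'_i \setminus C_i$ (to which Theorem 1 of \cite{GhoshalNN14} applies, showing the switch preserves rank-maximality of $G_i$). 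Your proposal treats each phase essentially in isolation, says the mixed-phase interaction is ``the principal obstacle,'' and then asserts that ``once this bookkeeping is in place, properties 1--5 propagate cleanly.'' That bookkeeping \emph{is} the proof; without the prefix/suffix decomposition and the cross-phase tracking of which edge of $AE$ a path uses and why its endpoint was still in $E(G'_{i-1})$, the argument does not go through.

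Two smaller points. First, you misquote line 10 of Algorithm \ref{main}: the condition is that $S$ contains an \emph{odd number} of edges of $(R \cap E_i)\setminus AE$, not exactly one; the odd-parity condition is exactly what Lemma \ref{biglemma} 2(b)--(c) needs for the counting argument that one rank-$i$ matched edge is traded away. Second, in the augmenting case you invoke Lemma \ref{biglemma}(1) to conclude a statement about ``every rank-maximal matching of $H_i$,'' but that lemma only constrains matchings in $M_{12}\cup M_{21}$, i.e., those obtained by augmenting through $G_1$ or $G_2$; to get point 4 you still need point 1 (every edge of a rank-maximal matching of $H_i$ lies in $\tilde{H}_i$) together with the structure of maximum matchings of $\tilde{H}_i$, and the paper's proof of point 1 itself leans on the induction hypothesis for points 3--5, so the five statements must be proven genuinely simultaneously rather than in the sequential order you sketch.
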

\begin{proof} 
{\em Point $1$.} At the end of phase $0$, the subgraph  $C_0$ as well as the set $AV$ contains exactly one vertex $a_0$. We can notice that $H_1=\tilde{H}_1 = H'_1$. Using Lemma \ref{aux1} it is easy to check that  the theorem holds  for $i=1$.

Suppose now that $i>1$. We first argue that every edge of $H'_i \setminus H'_{i-1}$ is present in $\tilde{H}_i$. By Fact \ref{alive} every edge of $H'_i \setminus H'_{i-1}$ has both endpoints in the set of alive vertices of $H'_i$. If both endpoints of such an edge belong to $R_i$, then they are also alive in $G'_i$ (a vertex that is not alive in $G'_i \setminus C_i $ is also not alive in $H'_i$) and thus such an edge is included in $\tilde{H}_i$. If one of the endpoints $v$ of such edge $(v,w)$ belongs to $C_i$, then $v$ must belong to the set $AV$ - by the induction hypothesis point 4 and hence edge $(v,w)$ is added to $\tilde{H}_i$ in line $4$.

The edges belonging to $\tilde{H}_{i-1} \setminus \tilde{H}_i$ are either those belonging to $G'_{i-1} \setminus G'_i$ or those removed in line $17$ during phase $i-1$.  By Lemma \ref{aux1} 1(b), 1(c) and 2(d), no such edge belongs to a maximum matching of $\tilde{H}_{i-1}$ and therefore by the induction hypothesis point 3, no such edge belongs to a rank-maximal matching of $H_{i-1}$. 

We have thus proved that $\tilde{H}_i$ contains every edge belonging to some rank-maximal matching of $H_i$.

{\em Point $2$}. It is easy to see that $C_i$ satisfies all the properties stated in point $2$ of the theorem.

{\em Point $3$}. We will now show  that every matching $M_i \oplus s_i$ is a maximum matching of $\tilde{H}_i$ that contains a rank-maximal matching of $H_{i-1}$. We are going to make use of the following claim.
	
\textbf{Claim:}
Suppose that $i \geq 1$ and $\tilde{H}_i$ contains every edge belonging to some rank-maximal matching of $H_i$. Then a maximum matching of $\tilde{H}_i$ that contains a rank-maximal matching of $H_{i-1}$, is a rank-maximal matching of $H_i$. 

{\em Case 1:} Phases $i-1$ and $i$ are  augmenting. There exists then an edge  $(a,p) \in AE$  such that $p \in E(G'_i)$. In phase $i-1$ the edge $(a,p)$ also belongs to $AE$, because in phase $i$ we do not add any new edges to $AE$ since $AV=\emptyset$. We claim that $p \in E(G'_{i-1})$.  If it were the case that $p \notin E(G'_{i-1})$, then  $(a,p)$ would have been deleted during phase $i-1$ in line $17$. Since $(a,p)$ is such that $p \in E(G'_i)$,  there exists in $\tilde{H}_{i}$ an $M_{i}$-augmenting path $T$ containing $(a,p)$ that ends at some free vertex $p'$ in $G'_{i} \setminus C_{i}$. The vertex $p'$ is also free in $G'_{i-1} \setminus C_{i-1}$. If $T$ is contained in  $\tilde{H}_{i-1}$, then by the definition of the set $S_{i-1}$ the path $T$ belongs to $S_{i-1}$ and thus by the induction hypothesis, $M_{i-1} \oplus T$ is a rank maximal matching of $H_{i-1}$. This then means that $M_i \oplus T$ is a rank-maximal matching of $H_i$, because $M_i \oplus T$ contains the same number of rank $i$ edges as $M_i$. 

Next assume that the path $T$ is not contained in $\tilde{H}_i$ (Figure \ref{correctness}(a)). Let $T'$ denote the maximal subpath of $T$ that starts at $a_0$ and is contained in $\tilde{H}_{i-1}$. It must end at a vertex $p''$ that is free in $M_{i-1}$ and  matched in $M_i$. We know that $M_{i-1} \oplus T'$ is a rank-maximal matching of $H_{i-1}$. The path $T'' = T \setminus T'$ connects two vertices that are alive in $G'_{i+1}$, one of which is free in $M_i$. Also, $T''$ is contained in $G'_i \setminus C_i$. Theorem 1 of \cite{GhoshalNN14}
states that given a rank-maximal matching $M_i$ of $G_i$ and an even-length $M_i$-alternating path $T''$ with one endpoint free in $M_i$ and the other alive in $G_{i+1}$, the matching $M_i \oplus T''$ is rank-maximal in $G_i$.
Thus, the matching $M_i \oplus T''$ is rank-maximal  in $G_i$, which means that $M_i \oplus T$ is a rank-maximal matching of $H_i$.

 {\em Case 2:} Phase $i-1$ is augmenting and $i$   non-augmenting.  Let $n_0$ denote the number of edges of $M_i$ that have rank smaller than $i$ and $n_2$ the number of edges of $M_i$ with rank $i$. Let us note, that since phase $i-1$ is augmenting, each matching $M_{i-1} \oplus s_{i-1}$ contains one edge more than $M_{i-1}$. A maximum matching of $\tilde{H}_i$ has the same cardinality as $M_i$. Therefore, a rank-maximal matching of $H_{i}$ contains at most $n_2-1$ edges of rank $i$.

Each $s_i$ from $S_i$ is an even length $M_i$-alternating path connecting $a_0$ and a vertex $a \in AV_i \cup Alive(i+1)$. By Lemma \ref{biglemma} 2(a),2(b)  the  matching $M_i \oplus s_i$ is a maximum matching of $\tilde{H}_i$ that has one edge of $AE$, $n_0$ edges not belonging to $AE$ and of rank strictly smaller than $i$ and $n_2-1$ edges of rank $i$. Each of the paths  $s_i \in S_i$ contains some path  $s_{i-1} \in S_{i-1}$. Also, each edge of $AE$ belongs to $S_i \cap S_{i-1}$. It is so because each edge $(a,p)$ that belongs to $AE$ at the beginning of phase $i$ is such that $p \in E(G'_{i-1})$ and thus $(a,p)$ is contained in some $s_{i-1}$ as well as some $s_i$. Hence, $M_i \oplus s_i$ is a rank-maximal matching of $H_i$.

%The remaining part of the proof is given in the Appendix.

{\em Case 3:} Suppose now that phase $i-1$ is non-augmenting and phase $i$  augmenting (Figure \ref{correctness}(b)). It means that there exists an edge $e=(a,p) \in AE$ such that $p \in E(G'_i)$. If the edge $e$ did not belong to $AE$ in phase $i-1$, it means that $a \in AV_{i-1}$ and thus there exists a path $s_{i-1}$ ending at $a$, whose application to $M_{i-1}$ yields a rank-maximal matching of $H_{i-1}$. Also, in $G'_i \setminus C_i$ there exists an even length $M_i$-alternating path $T$ that starts at $p \in Alive(i+1)$ and ends at a free vertex $p'$. By Theorem 1 of \cite{GhoshalNN14}, the matching $M_i \oplus T$ is rank-maximal in $G_i$.
The edge $e$ clearly has rank $i$. Therefore $M_i \oplus (\{e\} \cup T \cup s_{i-1})$  is a maximum matching of $\tilde{H}_i$ that contains a rank-maximal matching of $H_{i-1}$ and has one edge of rank $i$ more than $M_i$ - therefore it is rank-maximal in $H_i$. 

If the edge $e$ did belong to $AE$ in phase $i-1$, then $p \in O(G'_{i-1})$. In $G'_i \setminus C_i$ there exists an even length $M_i$-alternating path $T$ that  starts at $p$ and ends at a free vertex $p'$ (Figure \ref{correctness}(c)). Let $T'$ denote the maximal subpath of $T$ starting at $p$ and contained in $G'_{i-1} \setminus C_i$ and let $T''$ denote $T \setminus T'$. Also, let $s$ denote an $M_i$-alternating path from $a_0$ to $p$.
We notice that $T'$ ends at a vertex $a''$ alive in $G'_{i}$ and thus $M_{i-1} \oplus (s \cup T')$ is rank-maximal in $H_{i-1}$. Let us note that the edge $e'=(p'', a'')$ is of rank $i$. Also, $M_i \oplus (T'' \setminus e'')$ is rank-maximal in $G_i$. Therefore $M_i \oplus (s \cup T)$ is rank-maximal in $H_i$.

{\em Case 4:} In the final case, we assume that phases $i-1$ and $i$ are both non-augmenting. Hence a maximum matching of $\tilde{H}_i$ has the same cardinality as $M_i$. 
Each of the paths $s_i \in S_i$ contains some path of $S_{i-1}$. Let  $s_{i-1}$ denote a maximal subpath of $s_i$ that belongs to $S_{i-1}$.  First we prove that $M_i \oplus s_{i-1}$ is a rank-maximal matching of $H_i$. Since phase $i-1$ is  non-augmenting, $s_{i-1}$ is  an even length alternating path from $a_0$ to some $a' \in AV_{i-1} \cup Alive(i)$ and $M_{i-1} \oplus s_{i-1}$ yields a rank-maximal matching of $H_{i-1}$. Therefore $M_i \oplus s_{i-1}$ is also a maximum matching of $\tilde{H}_i$ that contains a rank-maximal matching of $H_{i-1}$. Thus $M_i \oplus s_{i-1}$ is a rank-maximal matching of $H_i$.

Let $T = s_i \setminus s_{i-1}$. 
%$T$ is also an even length alternating path. 
Since $s_{i-1}$ is a maximal subpath belonging to $S_{i-1}$, the edge in $T$ incident to $a'$ is of rank $i$ and $T$ does not contain any edge from $C_{i-1}$. Therefore $T \cap G'_{i-1}$ is a collection of path segments contained in $G'_{i-1} \setminus C_{i-1}$ and $T$ is obtained by connecting  such segments with rank $i$ edges. The endpoints of these path segments  belong to $Alive(i)$. In other words,  $T \cap G'_{i-1}$ is a collection of even length alternating paths in $G'_{i-1} \setminus C_{i-1}$ from a free vertex to an alive vertex. By Theorem $1$ of \cite{GhoshalNN14}, $(T \cap G'_{i-1}) \oplus M_{i-1}$ is a rank-maximal matching of $G_{i-1}$ and consequently a rank-maximal matching of $H_{i-1}$. Thus  $M_i \oplus s_i$ is a maximum matching of $\tilde{H}_i$ and contains a rank-maximal matching of $H_{i-1}$. Therefore, $M_i \oplus s_i$ is a rank-maximal matching of $H_i$.

{\em Point $4$}. Before  proving point $4$, let us note the following relationships between EG-decompositions of $\tilde{H}_i$ and $H'_i$:
(i) if $v \in E(H'_i)$, then  $v \in E(\tilde{H}_i$), (ii) if $v \in O(H'_i)$, then  $v \in O(\tilde{H}_i$).
They follow from the fact that each $EO$-edge of $H'_i$ belongs to some rank-maximal matching of $H_i$ and hence, by point $1$ of the current theorem it also belongs to $\tilde{H}_i$. The corollary of these two implications is:
(iii) if $v \in U(\tilde{H}_i)$, then $v \in  U(H'_i)$.

If phase $i$ is augmenting, then we set the set $AV$ as empty. This is because  by point $1(b)$ of Lemma \ref{aux1},
any vertex $v$ that  belongs to $C$ changes type.  On the other hand any vertex that belongs to $R$ has the same type in $G'_i$ and $\tilde{H}_i$. Therefore, it cannot belong to $AV_i$ either.

Suppose now that phase $i$ is non-augmenting. We first prove that every vertex  $v \in AV_i$ is alive in $H'_{i+1}$ but not in $G'_{i+1}$. We have two possibilities. First, assume that $AV$ contains  $v$ at the end of phase $i-1$. 
 
By the induction hypothesis, at the end of phase $i-1$, $v$ is alive in ${H}'_{i}$ but not  in $G'_{i}$. It means that it was added to 
$AV$ during some phase $j<i$. In phase $j$ there also existed an even-length $M_j$-alternating  path $P'$ from $a_0$ to $v$. This path was added then to $C$. Thus the path $P'$ is also present in $\tilde{H}_i$ (because no edge or veretx  is removed from $C$) and it belongs to $S_i$.
By point $3$ of the current theorem  $M_i \oplus P'$ is a rank-maximal matching of $H_i$, in which $v$ is free. It means that $v \in E(H'_i)$.
Therefore, $v$ is indeed alive  in $H'_{i+1}$ but not in $G'_{i+1}$.

On the other hand, if $v$ is added to $AV$ during phase $i$, then it is an endpoint of some path $s_i \in S_i$ and by point $3$ of the current lemma, $M_i \oplus s_i$ is a rank-maximal matching of $H_i$. Thus $v$ is free in the rank-maximal matching $M_i \oplus s_i$ of $H_i$, which means that is is also free in some rank-maximal matching of $H_j$ for each $j<i$. Therefore, $v$ is alive in $H'_i$ and since $v \in U(G'_i)$, it is not alive in $G'_i$.

 Conversely, suppose that $v$ is alive in $H'_{i+1}$ but not alive in $G'_{i+1}$. Then there is an even length alternating path from a free vertex to the alive vertex $v$ in $H'_i$. Now, by Theorem $1$ of \cite{GhoshalNN14}, every edge of the path belongs to some rank-maximal matching of $H_i$. Hence the whole path is present in $\tilde{H}_i$ and $v \in E(\tilde{H}_i)$. Since $v$ is alive in $H'_{i+1}$, $v$ is also an alive vertex in $\tilde{H}_{i+1}$. If $v$ is not alive in $G'_{i}$, then by the induction hypothesis, $v$ is added to $AV$ during phase $i-1$ and since phase $i$ is non-augmenting, it also belongs to $AV$ in phase $i$. Suppose now that $v$ is alive in $G'_{i}$. This implies that $v \in R$ after phase $i-1$. Since  phase $i$ is  non-augmenting and $v$ is an alive vertex in $\tilde{H}_{i+1}$ but not in $G'_{i+1}$, by point $2(c)$ of Lemma \ref{aux1}, $v$ must belong to  $U(G'_i) \cap E(\tilde{H}_i)$. There is an even length $M_i$-alternating path $P'$ from a free vertex to $v$ in $\tilde{H}_i$ but there is no such path in $G'_i$. Therefore the path $P'$ must contain at least one activated edge. No activated edge belongs to $M_i$ in $\tilde{H}_i$. Once $P'$ enters $C$ using that activated edge, the path can't leave $C$ and $a_0$ is the only free vertex inside $C$. Hence, there is an even length alternating path from $a_0$ to $v$ in $\tilde{H}_i$. Finally, by lines $8$ and $10$ of Algorithm \ref{main}, $v$ is added to $AV$ during phase $i$. \qed
\end{proof}

\begin{figure}
  \centering  
  \begin{minipage}[b]{\textwidth}
   \includegraphics[width=\textwidth]{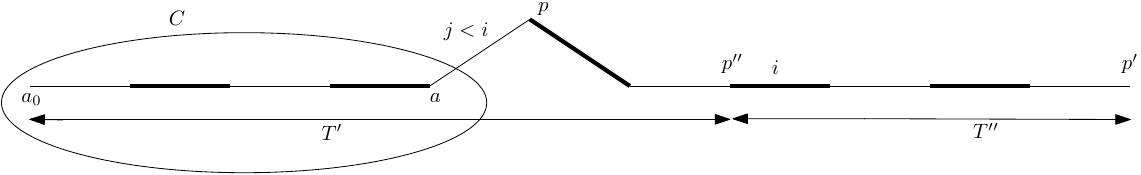}
    \centerline{} 
  \end{minipage} 
  (a)
\hspace{100mm}
  \begin{minipage}[b]{\textwidth}
    \includegraphics[width=\textwidth]{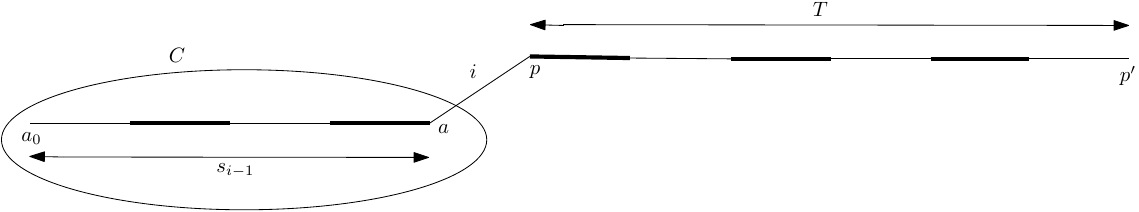}
    \centerline{} 
  \end{minipage}
  (b)
   \begin{minipage}[b]{\textwidth}
    \includegraphics[width=\textwidth]{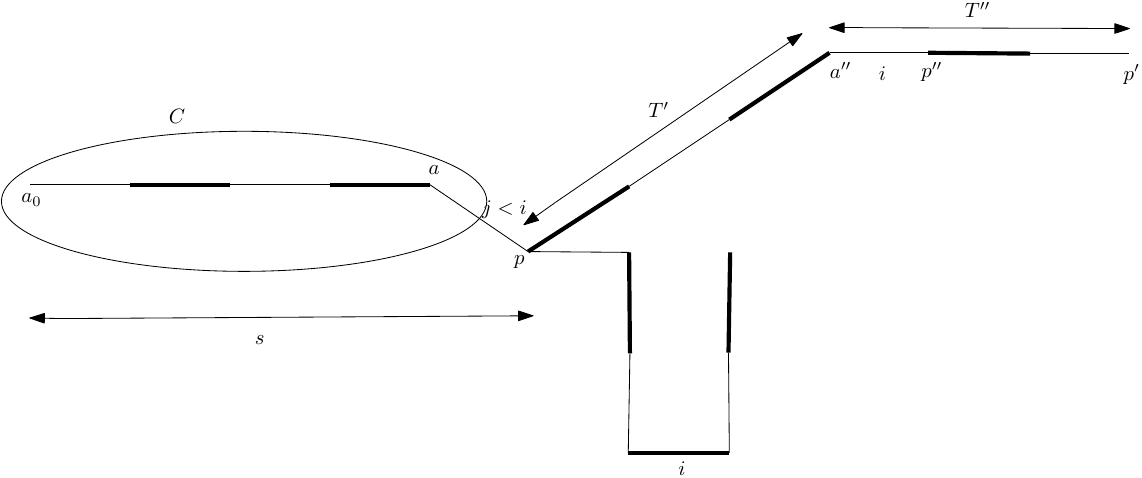}
    
  \end{minipage}
  (c)
  \caption{The figures represent different cases we encounter during the correctness proof of the algorithm. (a-b) The first two figures represent the case when both iterations $i-1$ and $i$ are  augmenting. (c) The third figure represents the case when the iteration $i-1$ is   non-augmenting  but $i$   augmenting.}
	
\label{correctness}
\end{figure}

\begin{theorem} \label{cor2}
Algorithm \ref{main} runs in $\mathcal{O}(\min(c'n ,n^2) + m)$ time.
\end{theorem}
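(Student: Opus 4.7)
The plan is to partition the work of Algorithm~\ref{main} into an amortised $O(m)$ contribution for per-edge bookkeeping and a residual $O(\min(rn, n^2))$ contribution for the alternating-path searches. First I would observe that lines 5--7 never need to materialise $R$, $AE$, or $\tilde{H}_i$ explicitly: we already have the precomputed reduced graphs $G'_i$ with their EG-decompositions, and membership in $C$, in $Alive(i)$, in $E(G'_i)$, etc., can be queried in $O(1)$ per vertex. The genuine edge-touching cost lies in (i) appending newly activated edges to $AE$ in line 6, (ii) testing the condition in line 8, and (iii) deleting edges from $AE$ in augmenting phases. Since each edge of the input enters $AE$ only $O(1)$ times (at most once as an $AE_u$-candidate, at most once as an $AE_o$-candidate) and is removed at most $O(1)$ times, the total edge-touching work summed over all phases is $O(m)$.

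Next I would bound the cost of the BFS in line 10, which must find every even-length $M_i$-alternating path from $a_0$ to a vertex of $U(G'_i)$ using an odd number of non-activated rank-$i$ crossing edges in $R \cap \mathcal{E}_i$. By Theorem~\ref{cor}(2), $C_i$ captures exactly the vertices whose EG-type in $H'_i$ differs from their type in $G'_i$, and after line~6 every edge of $AE$ has one endpoint in $C_i$ and the other in $R_i$. Lemmas~\ref{lemGE} and~\ref{lemGE2} then imply that the BFS need only explore $C_i \cup AE$ together with single-edge crossings through $R \cap \mathcal{E}_i$ into $U(G'_i)$. Consequently the search runs in time $O(|C_i| + \text{rank-}i\text{ crossings examined})$; the second summand folds into the $O(m)$ envelope, leaving $O(|C_i|) = O(n)$ of non-edge work per phase.

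Summing the per-phase cost gives the two halves of $\min(rn, n^2)$. Multiplying $O(n)$ by the $r$ phases yields $O(rn)$. For the $O(n^2)$ bound I would argue that $C$ grows monotonically inside any maximal block of consecutive non-augmenting phases, and an augmenting phase merely resets $AV$ to $\emptyset$ without shrinking $C$. Hence the number of phases in which line 10 does nontrivial work is bounded by the final value $|C_r| \le n$, each such phase contributing $O(n)$. Phases that neither enlarge $C$ nor alter $AE$ perform no genuine path search and cost only $O(1)$, provided the line-8 predicate is maintained incrementally via a counter of $AE$-edges whose post-endpoint is even in $G'_i$, updated whenever $AE$ or the decomposition of a relevant $G'_i$ is touched. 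Thus the total path-finding cost is $O(n^2)$, giving the claimed $O(\min(rn,n^2)+m)$ overall.

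The principal obstacle is the localisation argument of the second paragraph: showing that the BFS in line 10 can really be confined to a subgraph of size $O(|C_i|)$ plus amortised rank-$i$ edges, rather than to the whole of $\tilde{H}_i$ which could be $\Theta(m)$ per phase. This confinement rests on Lemma~\ref{lemGE}(3): after adding a single edge incident to a $U$-vertex, the only vertices whose types change are those reachable from that edge through old $U$-vertices, so previously classified $OE$-edges need not be re-examined. Extending this localisation inductively across phases with the aid of Lemmas~\ref{aux1} and~\ref{biglemma} is where the running-time argument really lives; once this is in place, the final selection of a shortest $s_r \in S_r$ is a single BFS in $\tilde{H}_r$ in time $O(n+m)$.
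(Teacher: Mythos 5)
There is a genuine gap in your accounting of the $AE$ maintenance cost. You claim that the per-edge bookkeeping --- in particular testing the condition in line 8 --- amortises to $O(m)$ because each edge enters and leaves $AE$ only $O(1)$ times. But the cost does not lie in entering or leaving: an edge $(a,p)$ placed in $AE_o$ (i.e.\ with $p \in O(G'_i)$) is \emph{not} removed and persists in $AE$ through subsequent phases, and in each later phase $j$ the algorithm must re-determine whether $p$ now lies in $E(G'_j)$ (which would trigger an augmenting phase), in $U(G'_j)$, or still in $O(G'_j)$, since the EG-type of $p$ varies with $j$. A single edge can therefore be re-examined in up to $r$ phases, and this is precisely where the paper locates the dominant $O(rn)$ term of the naive analysis. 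Your proposed fix --- a counter of $AE$-edges whose post-endpoint is even, ``updated whenever the decomposition of a relevant $G'_i$ is touched'' --- presupposes that you can enumerate, per phase, exactly those $AE$-edges whose endpoint changes type; making that enumeration cheap is the whole difficulty. The paper's device is to store, for each vertex $v$, the sorted list of phases at which $v$'s type among $G'_1,\dots,G'_r$ changes; each such list has length at most the number of augmentations performed by Algorithm~\ref{alg1}, hence at most $\min(r,n)$, and charging the re-examinations to these lists gives $O(\min(rn,n^2))$. This observation is absent from your proposal, and without it your $O(m)$ claim for the bookkeeping fails.

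A secondary weakness: you assert that phases which neither enlarge $C$ nor alter $AE$ ``perform no genuine path search and cost only $O(1)$,'' and use this to bound the number of expensive phases by $|C_r|\le n$. But whether the search in line 10 of Algorithm~\ref{main} finds a new path depends on the decomposition of $G'_i$, which differs from that of $G'_{i-1}$; you cannot certify that a phase is fruitless without either running the search or, once again, consulting the type-change lists. Relatedly, your appeal to Theorem~\ref{cor}(2) to say that $C_i$ captures \emph{exactly} the vertices whose type differs between $H'_i$ and $G'_i$ overstates what is proved: the paper explicitly notes that a vertex outside $C_i$ may be active in $G'_i$ yet unreachable in $H'_i$, so $C_i$ only accounts for the ``positive'' changes. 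These points do not doom the localisation idea, but as written the $O(n^2)$ half of your bound is not established.
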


\begin{proof} 
Without any additional assumptions the running time of Algorithm \ref{main} is $\mathcal{O}(rn + m)$. The extension of $C$ to include $M_i$-alternating paths may be easily implemented to take $\mathcal{O}(m)$ time in total as the set $C$ does not shrink. 
Also, during the process of the extension of $C$ to include $M_i$-alternating paths, we only traverse  $R$ starting from the activated edges of type $(a_i, p_i)$, where $a_i \in C$ and $p_i \in U(G'_i)$. We  never reach $C$ while traversing  these paths and each such traversal ends with the inclusion of some path to $C$ in after that phase.  Hence we do not traverse any edge inside $C$ or inside $R$ more than once. 

The executions of line 8 and 15 may require $\mathcal{O}(rn)$ time as every edge $(a,p)$ may belong to $AE$ for a number of phases and in each one of them we need to check to which of the sets $E(G'_i), U(G'_i), O(G'_i)$ the endpoint $p$ belongs. The rest of the time  the algorithm takes is $\mathcal{O}(m)$.

Now we show how to store the reduced graphs after every phase and their GE-decomposition efficiently. Let us consider a vertex $v \in V(G)$. Before the first iteration $v$ is an even vertex. During the algorithm, the type of $v$ switches between an even vertex and an odd vertex. If $v \in U(G'_i)$ for some iteration $1\leq i \leq r$, then $v$ remains an unreachable vertex for every subsequent iterations. Hence we maintain an ordered list $i_1 < i_2 < \ldots < i_k$, which denotes the phases when $v$ changes its type, i.e., $v$ belongs to $O(G'_{i_1}), E(G'_{i_2}), \ldots, U(G'_{i_k})$. Each such list has length at most $n$. Combining these lists, we obtain a list of lists  of size $\mathcal{O}(n^2)$. This list of lists stores the GE-decomposition of every reduced graphs $G'_1, G'_2, ..., G'_r$. 
 
For any edge $(a,p)$ in $G$, its presence in each reduced graph can be computed from the GE-decomposition of the end-points of that edge. Hence we do not need to store any extra information about the edges of $G$ to reconstruct the reduced graphs. 

Also, similarly as in the case of the algorithm for a rank-maximal matching, we may modify the algorithm so that it runs in $\mathcal{O}(\min(c'n ,n^2) + m)$ time, where $c'$ denotes the maximal rank in an optimal solution. To this end, it suffices to stop when there are no new edges to add. \qed
\end{proof}

\begin{theorem}
Assuming we are given the reduced graphs $G'_1, G'_2, \ldots, G'_r$, the reduced graphs of $H$ can be computed in $\mathcal{O}(\min(c'n ,n^2) + m)$ time.
\end{theorem}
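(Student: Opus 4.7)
The plan is to show that the reduced graphs $H'_i$ can be produced as a byproduct of Algorithm \ref{main} without increasing its asymptotic running time. By Theorem \ref{cor}(1), at each phase $i$ the graph $\tilde{H}_i$ constructed by the algorithm contains every edge of every rank-maximal matching of $H_i$. Since $H'_i$ consists precisely of such edges (together with the standard $OO$ and $OU$ edge removals performed in Algorithm \ref{alg1}), it suffices to compute the EG-decomposition of $\tilde{H}_i$ at each phase and then apply the same reduction rules.

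First I would compute the EG-decomposition of $\tilde{H}_i$ incrementally from that of $G'_i$, which is part of the input. By Lemmas \ref{aux1} and \ref{biglemma}, the only vertices whose type differs between $G'_i$ and $\tilde{H}_i$ are those reachable in $\tilde{H}_i$ from the new vertex $a_0$ along an alternating path that uses an edge of $AE$, which are exactly the vertices tracked in $C_i$ and $AV$ by Algorithm \ref{main}. Thus the decomposition update at phase $i$ costs $O(|C_i|+|AE|)$ plus the work of scanning activated edges once, both of which are already performed (or easily extracted) inside Algorithm \ref{main}. Having the decomposition of $\tilde{H}_i$, I obtain $H'_{i+1}$ from it by deleting edges of rank $>i$ incident to $O(\tilde{H}_i)\cup U(\tilde{H}_i)$ and adding $\mathcal{F}_{i+1}$, mirroring lines 6--8 of Algorithm \ref{alg1}.

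The running time analysis follows the same lines as the proof of the preceding theorem. The baseline cost is $O(rn+m)$: the $O(m)$ term arises from inspecting each edge during the single phase in which it is first processed, and the $O(rn)$ term from querying vertex types across phases. This is reduced to $O(\min(rn,n^2)+m)$ by maintaining, for every vertex $v$, the sorted list $i_1<i_2<\cdots<i_k$ of phases at which $v$ changes type; each such list has length at most $n$ because every type change corresponds to a distinct augmentation in $G$.

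The main obstacle is that $\sum_i |H'_i|$ can be as large as $\Theta(rm)$, so writing out each $H'_i$ explicitly is too expensive. I would circumvent this by representing the sequence $H'_1,\ldots,H'_{r'}$ implicitly: store a base edge set together with per-edge insertion/deletion events keyed by phase, and, for each vertex, the list of phases at which its EG-type changes. Every edge $(u,v)$ enters and leaves the reduced graphs only at the distinguished phases determined by the type transitions of $u$ and $v$, so the total number of events is bounded by the same $O(\min(rn,n^2)+m)$ expression, and any particular $H'_i$ (with its EG-decomposition) can be reconstructed in time proportional to its difference from a neighboring $H'_{i\pm 1}$.
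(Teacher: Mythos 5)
Your central claim---that the only vertices whose EG-type differs between $G'_i$ and $H'_i$ (or $\tilde{H}_i$) are those reachable from $a_0$ via activated edges, i.e.\ those tracked in $C_i$ and $AV$---is false, and this is exactly where the difficulty of the theorem lies. By Lemma \ref{aux1}, point 1d(i)--(ii), in an augmenting phase a vertex \emph{outside} $C$ may move from $E(G'_i)\cup O(G'_i)$ to $U(H'_i)$: the extra augmentation consumes a free vertex, and everything that was reachable only through that free vertex becomes unreachable. The paper states this explicitly (``The subgraph $C_i$ does not encompass all changes -- in particular, it may happen that some vertex $v \notin C_i$ is active in $G'_i$ but unreachable in $H'_i$''). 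These newly unreachable vertices are not localized near $C_i$, so your claimed $O(|C_i|+|AE|)$ per-phase update cannot find them; detecting them naively requires recomputing reachability from the free vertices of $N_i$, which costs $\Theta(m)$ per phase and blows the budget to $\Theta(rm)$. Your fallback of an implicit event-list representation does not rescue this, because the per-vertex lists of type-transition phases in $H$ are precisely the object you have not shown how to compute. A secondary inaccuracy: the EG-decomposition of $\tilde{H}_i$ is not that of $H'_i$ (the two graphs differ in both directions), so one needs the sharper characterization of Lemma \ref{hgDec}, namely that $v\in E(H'_i)$ iff $v$ lies at even distance on an $N_i$-alternating path in $\tilde{H}_i$ from a free vertex to a vertex of $Alive(i)\cup AV_i$ --- not mere membership in $E(\tilde H_i)$.

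The paper's proof (Section on updates of reduced graphs, Algorithm \ref{updatesalgorithm} and Lemma \ref{updatesruntimeproof}) resolves this by iterating \emph{backwards} from $i=r$ to $1$. It builds $N_i$-alternating forests $T_i$ rooted at the free vertices of $N_i$, exploits the nesting $V(T_i)\subseteq V(T_{i-1})$ so the forests need not be rebuilt, and determines each vertex's type at the \emph{largest} $i$ for which it is reachable (the set $Z_i$); for all larger $j$ the vertex is in $U(H'_j)$, and for all smaller $j$ its type follows from the parity bookkeeping of Lemma \ref{uptype} without touching its incident edges again. Once a vertex enters $Z$ its edges are never rescanned, which is what yields the amortized $O(m+\min(rn,n^2))$ bound. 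Your forward, purely local incremental scheme is not a workable substitute for this mechanism.
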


%\section{Open problem(s)}

%Does there exist an efficient algorithm for the dynamic  maximum rank-maximal matching or dynamic fair matching? These problems are described in \cite{HuangKM016}. Can the ideas from this paperbe used to this end?

\section{Example of How Algorithm \ref{main} Works} \label{example}

\begin{figure}
\centering
   \includegraphics[width=.8\textwidth]{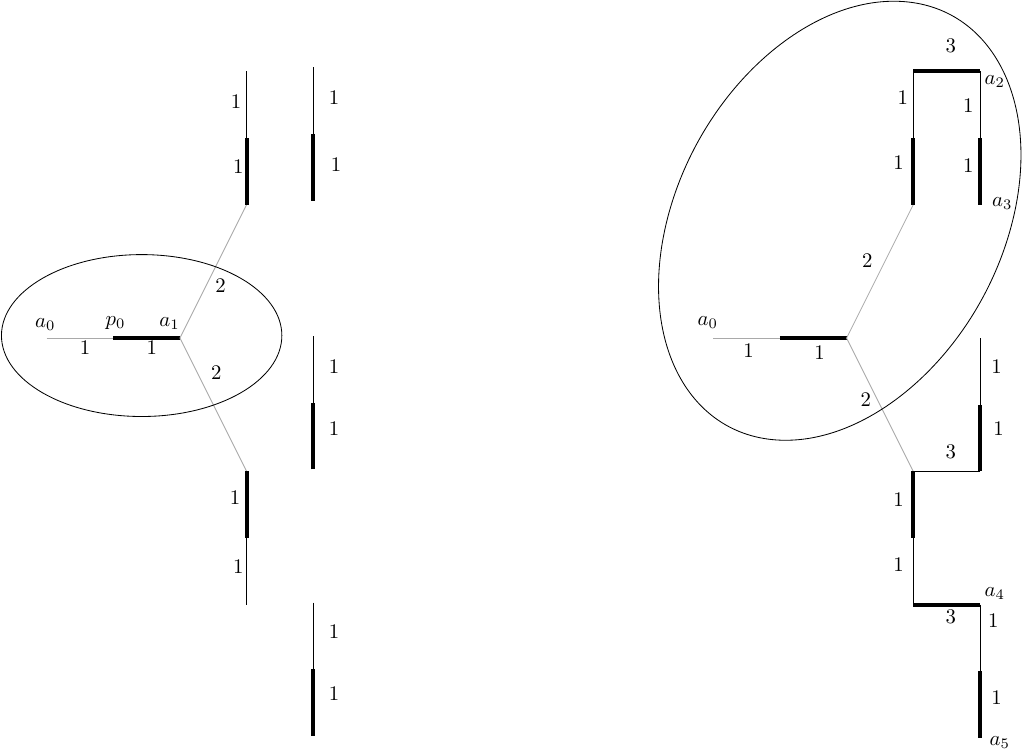}
	\caption{On the right-hand side there is the graph $H$ obtained by adding the vertex $a_0$ and the edge $(a_0, p_0)$ to $G$. The graph $\tilde{H}_2$ is presented on the left-hand side. In this particular example the graph on the right-hand side is also equal to $\tilde{H}_3$ (i.e. $H = \tilde{H}_3$). The vertices inside the ellipse form the set $C$. Label of each edge is equal to its rank.}
	\label{algofig}
\end{figure}

Let us take a look at what Algorithm \ref{main} does when executed on the graph presented on Figure \ref{algofig}. At the beginning of the first iteration, $C$ contains only vertex $a_0$, $R$ contains the rest of the graph. The vertex $a_0$ is added to the set $AV$. The edge $(a_0, p_0)$ is added to the set $AE$ of activated edges. Since $p_0 \in U(G'_1)$, sets $C$ and $AV$ are updated. The set $C$ from now on contains the subgraph inside the ellipse on the left-hand side of Figure \ref{algofig} and $AV$ contains the vertices $a_0$ and $a_1$.  In order to get a rank maximal matching of $H_1$,
we may apply one of the two  alternating paths - each one  starts at $a_0$ and one of them finishes at $a_1$ and the other at $a_0$ (a zero-length path).

In the second iteration the algorithm first updates the set of activated edges. The updated graph is presented on the left-hand side of Figure \ref{algofig}. At this point the set $AE$ contains two edges incident to $a_1$  and crossing the ellipse. Since both endpoints of edges of $AE$ at this point belong to $E(\tilde{H}_2)$, the algorithm enters an augmenting phase. The set $AV$ is reset to empty. In order to obtain a rank-maximal matching of $H_2$ any augmenting path of $\tilde{H}_2$ may be applied. Sets $C$ and $R$ remain the same.

In the third iteration the algorithm first updates the sets $AV$ and $C$. The set $AV$ contains the vertices $a_2$ and $a_3$. The updated graph is presented on the right-hand side of Figure \ref{algofig}. Since there are no edges in $AE$ such that both endpoints belong to $E(\tilde{H}_3)$, the algorithm enters a non-augmenting phase. The set $AE$ contains an edge of rank $2$ crossing the ellipse. Vertices which are alive and reachable from $a_0$ after this iteration are denoted as $a_4$ and $a_5$. In order to obtain a rank-maximal matching of $H_3$ we can either apply an even length alternating path starting at $a_0$ and ending at an alive vertex or apply an even length alternating path starting at $a_0$ and ending at a vertex of $AV$.

\begin{figure}
\centering
   \includegraphics[width=.5\textwidth]{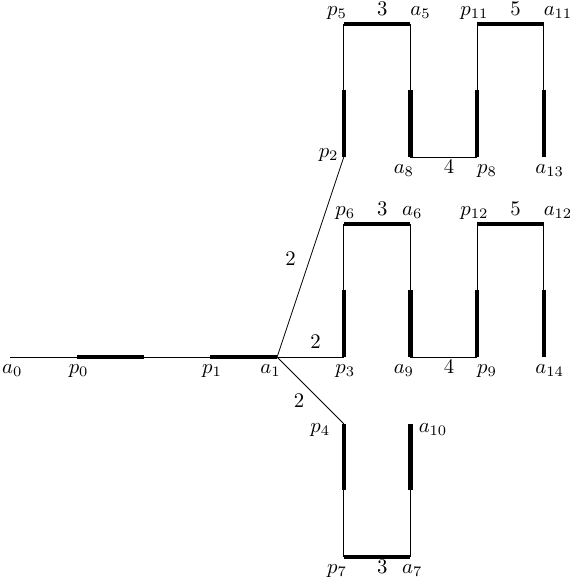}
	\caption{The figure presents the graph $H$ which is obtained by adding the edge $(a_0, p_0)$ to $G$. Thick edges belong to a rank-maximal matching of $G$.}
\label{algofig2}
\end{figure}

Let us now take a look at the example presented in Figure \ref{algofig2}. This example gives us an idea about changing from an augmenting phase to a non-augmenting phase and vice versa. Before the first iteration, the vertex $a_0$ is an activated vertex and $(a_0, p_0)$ is an activated edge and the algorithm enters a non-augmenting phase after the first iteration. In the second iteration the edges $(a_1, p_2)$, $(a_1, p_3)$, $(a_1, p_4)$ of rank $2$ become activated. Also $p_2$, $p_3$ and $p_4$ are the even vertices in $G'_2$, thus the algorithm enters an augmenting phase and a rank-maximum matching in $H_2$ can be obtained by applying any of the augmenting paths starting from $a_0$. The vertices $p_2$, $p_3$ and $p_4$ change their type to unreachable in $G'_3$. Therefore, the algorithm changes to a non-augmenting phase.  

In iteration $4$, $a_8$ and $a_9$ are some of the activated vertices present in the graph. $(a_8, p_8)$ and $(a_9, p_9)$ are two activated edges of rank $4$. Since both $p_8$ and $p_9$ belong to $E(G'_4)$,  the phase changes to an augmenting phase. After phase $5$, there is no augmenting path starting from $a_0$ present in the graph. Therefore the graph enters to a non-augmenting phase after the final iteration.  After the final phase, $a_{11}$, $a_{12}$, $a_{13}$ and $a_{14}$ are the activated vertices present in the graph.  A rank-maximal matching of $H$ can be computed by applying an even length alternating path from $a_0$ to any of the activated vertices.

\section{Updates of Reduced Graphs $H'_i$} \label{updates}

Algorithm \ref{main} computes graphs $\tilde{H}_i$ such that $\tilde{H}_i$ contains every edge that belongs to some rank-maximal matching of $H_i$. Our goal now is to determine which edges of $\tilde{H}_i$ do not belong to  $H'_i$. 

%In order to do that we are going to run a similar procedure as in Algorithm \ref{?}. The main difference is that we already know maximum matchings $N_i$ of each of the graphs $H'_i$. Recall that Algorithm \ref{?} outputs a matching $N$ which is rank-maximal in $H$, thus $N_i = N \cap (\mathcal{F}_1 \cup \mathcal{F}_2 \ldots \mathcal{F}_i)$ is rank-maximal in $H_i$ and also is a maximum matching in $H'_i$. Additionally we claim that $EG$-decompositions of $H'_i$ can be obtained from $EG$-decompositions of $\tilde{H}_i$. One can easily see that given matchings $N_i$ and $EG$-decompositions of $H'_i$ we can simply run a procedure similar to Algorithm \ref{?} with the exception that we do not have to execute lines $5$ and $9$. The remaining part of this algorithm can be easily executed in linear time, giving the main theorem of this section:

%Note that we do not give low-level implementation details regarding the above procedure, instead we discuss how to obtain $EG$-decompositions of $H'_i$ from $EG$-decompositions of $\tilde{H}_i$. Before we start, we need the following definition:

By Theorem \ref{cor} point $3$ for any $s_i \in S_i$ matching $M_i \oplus s_i$ is rank-maximal in $H_i$.
For each $i$ we choose an arbitrary $s_i \in S_i$ and denote the matching $M_i \oplus s_i$ as $N_i$.

\begin{lemma} \label{hgDec}
A vertex $v$ belongs to $E(H'_i)$ (resp. $O(H'_i)$) iff in the graph $\tilde{H}_i$ there exists an $N_i$-alternating path $P'$ from a free vertex $u$ to a vertex $w\in Alive(i+1) \cup AV_i$ such that $v$ lies on $P'$ and its distance on $P'$ from $u$ is even (respectively, odd).
\end{lemma}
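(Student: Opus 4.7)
My plan is to reduce the lemma to the classical Edmonds--Gallai characterization applied to $N_i$ viewed as a maximum matching of $H'_i$, and then argue that replacing $H'_i$ by $\tilde{H}_i$ and restricting the endpoint to lie in $Alive(i) \cup AV_i$ is without loss of generality. Since $N_i$ is a rank-maximal matching of $H_i$, invariant~(2) of Algorithm~\ref{alg1} makes it a maximum matching of $H'_i$, so the standard EG-characterization reads: $v \in E(H'_i)$ (resp.\ $O(H'_i)$) iff there is an even-length (resp.\ odd-length) $N_i$-alternating path in $H'_i$ from a free vertex to $v$.

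As a preliminary step I would establish the set identity $\bigcap_{j < i} E(H'_j) = Alive(i) \cup AV_i$. Using Theorem~\ref{cor} part~5 (which pins down $AV_i$ as exactly the vertices active in $H'_i$ but not in $G'_i$) together with monotonicity of the reduced-graph construction (once a vertex leaves $E$ it never returns in any subsequent phase), this identity is essentially bookkeeping. In particular every vertex active in $H'_i$ automatically lies in $Alive(i) \cup AV_i$, so one can always take $w = v$ whenever $v$ itself is active in $H'_i$.

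The $(\Leftarrow)$ direction is then short: given an $N_i$-alternating path $P \subseteq \tilde{H}_i$ from a free vertex $u$ to $w \in Alive(i) \cup AV_i$ passing through $v$ at the correct parity, the prefix of $P$ from $u$ to $v$ is itself an $N_i$-alternating path of the same parity. It lies in $H'_i$, because the only edges of $\tilde{H}_i$ that might fail to lie in $H'_i$ have both endpoints in $U(H'_i)$ and so cannot appear on any alternating prefix emanating from a free vertex. Applying the standard EG-characterization to this prefix delivers $v \in E(H'_i)$ or $v \in O(H'_i)$ as required.

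The harder $(\Rightarrow)$ direction is where I expect the real work. Starting from a standard EG-alternating path $Q$ in $H'_i$ from a free vertex to $v$, the matching edges of $Q$ lie automatically in $N_i \subseteq \tilde{H}_i$ by Theorem~\ref{cor} part~3, so the obstacle is showing that the non-matching edges on $Q$ are also in $\tilde{H}_i$. My plan is an induction on $i$ that carries, as a strengthened invariant, the statement that $\tilde{H}_i$ contains every edge of $H'_i$ both of whose endpoints are active in $H'_i$; the inductive step uses Lemmas~\ref{aux1} and~\ref{biglemma} together with the phase-by-phase bookkeeping of activated edges already analysed in the proof of Theorem~\ref{cor}, in particular the way $AE$ accumulates precisely those edges of $H'_i \setminus G'_i$ that are incident to vertices changing type between $G$ and $H$. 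Once this invariant is in hand, $Q$ lies entirely in $\tilde{H}_i$ and the choice $w = v$ completes the proof.
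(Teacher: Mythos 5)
The central gap is in your $(\Leftarrow)$ direction. You justify that the prefix of $P$ from $u$ to $v$ lies in $H'_i$ by asserting that every edge of $\tilde{H}_i\setminus H'_i$ has both endpoints in $U(H'_i)$. That assertion is nowhere established and is essentially the lemma in disguise: the paper only observes the \emph{converse} containment (edges of $H'_i\setminus\tilde{H}_i$ have both endpoints in $U(H'_i)$) and explicitly states that determining which edges of $\tilde{H}_i$ fail to lie in $H'_i$ is ``our more important goal'' --- i.e.\ the open problem this section is solving. Indeed $\tilde{H}_i$ inherits all of $R_i=G'_i\setminus C_i$, and by Lemma~\ref{aux1}~1d(ii) a vertex outside $C$ can drop from $E(G'_j)$ to $U(H'_j)$ in an earlier augmenting phase $j<i$, so an edge of $G'_i$ can be absent from $H'_i$ while one endpoint is merely in $O(H'_i)$, not $U(H'_i)$. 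Moreover, by truncating $P$ at $v$ you discard the hypothesis $w\in Alive(i)\cup AV_i$ entirely, yet that hypothesis is precisely the lever the paper uses: it shows that for such a path $N_i\oplus P$ is again a rank-maximal matching of $H_i$ (via $(M_i\oplus s_i)\oplus P=M_i\oplus(s_i\oplus P)$, Theorem~\ref{cor} points~3--4 and the switching-path results of \cite{GhoshalNN14}), whence every edge of $P$ lies in some rank-maximal matching of $H_i$ and therefore in $H'_i$. Without some argument of this kind your prefix need not live in $H'_i$ and the EG-characterization cannot be invoked.

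Two further, smaller issues. In the $(\Rightarrow)$ direction you only announce a plan (``this is where I expect the real work''), and your proposed invariant --- $\tilde{H}_i$ contains every edge of $H'_i$ with both endpoints active --- can fail for $OO$ edges of $H'_i$, which lie in no maximum matching of $H'_i$ and hence need not be in $\tilde{H}_i$; the correct statement, and the one the paper uses via Theorem~\ref{cor} point~1, is that $\tilde{H}_i$ contains every edge lying in some rank-maximal matching of $H_i$. Finally, ``every vertex active in $H'_i$ lies in $Alive(i)\cup AV_i$'' is false: that set equals $\bigcap_{j\le i}E(H'_j)$, so a vertex of $O(H'_i)$ is active but not in it, and for $v\in O(H'_i)$ you must extend the witnessing path by the matched edge $(v,N_i(v))$ rather than take $w=v$.
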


\begin{proof}
Suppose first that $v \in E(H'_i)$. Then in $H'_i$ there exists an $N_i$-alternating path $P'$ from a free  vertex $u$ to a vertex $w \in \bigcap_{j=1}^{i} E(H'_j)$ such that $v$ lies on $P'$ and its distance on $P'$ from $u$ is even. By Theorem $1$ of \cite{GhoshalNN14} $N_i \oplus P'$ is also a rank-maximal matching of $H_i$. This means that every edge of $P'$ belongs to some rank-maximal matching of $H_i$ and therefore by Theorem \ref{cor} point 1 the graph $\tilde{H}_i$ contains each edge of $P'$.  By Theorem \ref{cor} point 4 we have  $\bigcap_{j=1}^{i} E(H'_j)=Alive(i+1) \cup AV_i$.

The case when $v \in O(H'_i)$ is analogous.

Let us assume now that in the graph $\tilde{H}_i$ there exists an $N_i$-alternating path $P'$ from a free vertex $u$ to a vertex $w\in Alive(i+1) \cup AV_i$ such that $v$ lies on $P'$ and its distance on $P'$ from $u$ is even. It suffices to show that every edge of $P'$ belongs to some rank-maximal matching of $H_i$, because it will mean that every edge of $P'$ belongs to $H'_i$.

If every edge of $N_i \cap P'$ belongs also to $M_i$, then the whole path $P'$ lies in $G'_i \setminus C_i$ and thus by Theorem $1$ from \cite{GhoshalNN14} $N_i \oplus P'$ is also a rank-maximal matching of $H_i$ and we are done.

Next assume that $P'$ is not totally contained in $G'_i \setminus C_i$. In other words, $V(P' \cap C_i) \neq \phi$. Let $u$ denote an endpoint of $s_i$. Clearly, $u \in Alive(i+1) \cup AV_i$.

Assume that  phase $i$ is non-augmenting. Then  by Definition \ref{sipaths}  every  path of $S_i$ ends with a vertex matched in $M_i$. After the application of $s_i$, $u$ is the only free vertex that is reachable from $C$ by an $N_i$-alternating path. Therefore, the free vertex in $P'$ must be $u$. Let the other endpoint of $P'$ be $w$.

Let $x$ be the last  vertex on  $P'$ (considered from $u$ to $w$) that belongs to  $s_i$. It must hold that $x \in C_i$ - otherwise, $P'$ is contained in $G'_i \setminus C_i$. 
%Let us partition the path $P$ into two subpaths $P_1$ and $P_2$, where $P_1 = uPx$ and $P_2 = xPw$(as depicted on Figure \ref{figupd}). 
%Both $Px$ and $s_ix$ are alternating paths between the free vertex $v$ and the vertex $x$. Let $CV = \{v, x_k, \ldots, x_2, x_1, x\}$ be the sequence of common vertices of these two paths. If we consider the symmetric difference $(Px) \oplus (s_ix)$, it  consists of some simple cycles. Let  $C_j$ be one of such cycles. $C_j$ contains two consecutive vertices from $CV$. Let those vertices be $x_{j+1}$ and $x_{j}$. Then the edges of $C_j$ are $(x_{j+1}s_ix_j) \cup (x_{j+1}Px_j)$. Also the vertices $x_{j+1}$ and $x_j$ are unmatched in $C_j$.

Let $P_2$ denote the subpath of $P'$ between $x$ and $w$ and $s^1_i$  the subpaths of $s_i$ between $a_0$ and $x$.   We can notice that the path consisting of two subpaths $s^1_i$ and $P_2$ forms a path $s'_i$ that belongs to $S_i$. Therefore, $M_i \oplus s'_i$ is a rank-maximal matching of $H_i$, in which $w$ is free. A path consisting of a single vertex $a_0$ also belongs to $S_i$, hence every edge of $M_i$ belongs to $H'_i$. This means that every edge of $s'_i$ and hence every edge of $P_2$ belongs to $H'_i$.

When we look at the symmetric difference of $P'$ and $s_i$, it consists of $s'_i$ and possibly some number of cycles. The cycles of $P'\oplus s_i$ and the path $s'_i$ are vertex-disjoint, because every vertex on $s_i$ is matched both in $M_i$ and $N_i$ with an edge belonging to $s_i$. This means that each vertex on $P'$ is matched via an $N_i$-edge, which does not belong to $P' \oplus s_i$ (because such an edge belongs both to $P'$ and $s_i$). For the same  reason each cycle $C'$ contained in $P' \oplus s_i$  is $M_i$-alternating.  To complete the proof it suffices to show that every edge of $C'$  belongs to some rank-maximal matching of $H_i$.  

\textbf{Claim:} Any $M_i$-alternating cycle $C'$   in $\tilde{H_i}$ is also present in $G'_i$.  

\begin{proof}  If $C'$ is not contained in $G'_i$, then it must contain some activated edges. No activated edge belongs to $M_i$. We label each vertex $v$ of $C'$ as follows: we give it a label $R$ if it does not belong to $C$ at the beginning of phase $i$; otherwise, we give $v$ a label $C_{j}$ if it was added to $C$ during phase $j<i$. Let $e=(a,p)$ be any activated edge contained in $C'$. No activated edge belongs to $M_i$. Then, either (i) $a \in C_{j}, p \in C_{j'}$ for some $j<j'<i$ or (ii)  $a \in C_{j}, p \in R$.  Let $k$ be a minimum index such that some vertex of $C'$ belongs to $C_k$. Recall that no activated edge belongs to $M_i$.
   Hence, the part of $C'$ contained in $C_k$ would have to be of odd length (compare the observation in the proof of Lemma \ref{aux1}).
But this means that we have arrived at a contradiction, because all activated edges of $C'$ incident to $C_k$ are incident to vertices of $\mathcal{A}$. This means that $C'$ is present  in $G'_i$.  \qed
\end{proof}

 % Now by Theorem $1$ of \cite{GhoshalNN14}, every edge of $C_j$  belongs to some rank-maximal matching of $G_k$. Therefore, $C_j$ is present in $G'_i$ too.  

By Theorem 1 of \cite{GhoshalNN14}, every edge of $C'$  belongs to some rank-maximal matching of $G_i$.

$s'_i$ and $C'$ are vertex disjoint.  When we apply the alternating path $s'_i$ to $M_i$ obtaining $N'_i$, it does not affect the vertices of $C'$. Hence, every edge of $C'$ belongs to some rank-maximal matching of $H_i$, because we can also apply $s'_i$ to a rank-maximal matching $M'_i=M_i \oplus C'$ of $G_i$ and obtain a rank-maximal matching $N''_i$ of $H_i$.

If  phase $i$ is augmenting, every path of $S_i$ path ends at a free vertex in $M_i$. After the application of $s_i$, $u$ is the only matched vertex in $Alive(i+1) \cup AV_i$ reachable from $C$. Hence, once again $u$ is one of the endpoints of $P'$. The rest of the proof is analogous to the previous case. This completes the proof.  \qed
\end{proof}

\begin{figure}
\centering
   \includegraphics[width=.5\textwidth]{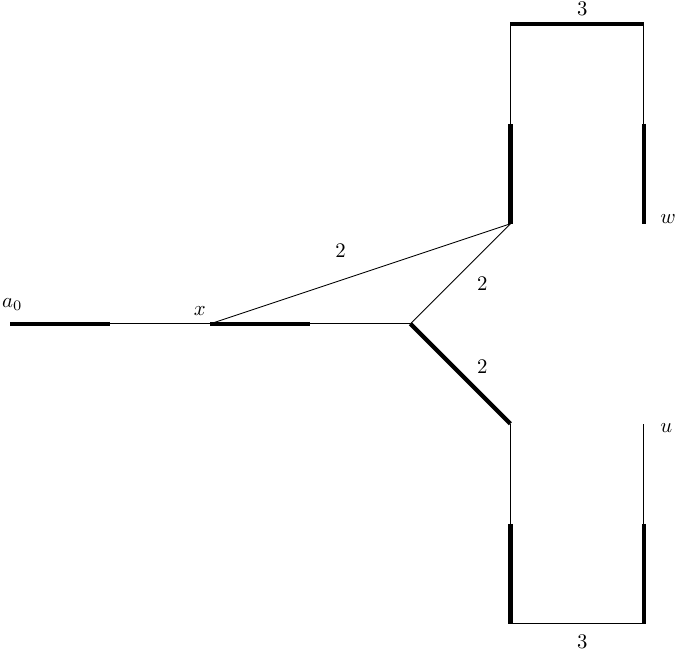}
	\caption{In the figure, $P'$ represents an alternating path from $u$ to $w$ via vertex $x$. Here $s_i$ is the alternating path from $a_0$ to $u$}
	\label{updatefig3}
\end{figure}

It remains to show how to efficiently compute $EG$-decompositions of graphs $H'_i$ given $\tilde{H}_i$. Note that we cannot simply apply Lemma \ref{hgDec} a multiple number of times for each of the graphs $\tilde{H}_i$, as such an approach would lead to an algorithm of complexity $\mathcal{O}(rm)$. Below we describe a general idea behind Algorithm \ref{updatesalgorithm} for computing $EG$-decompositions of $H'_i$, then in Theorem \ref{updatesruntimeproof} we show that it is possible to implement this algorithm to achieve an $\mathcal{O}(m + \min(c'n, n^2))$ runtime.

\begin{lemma} \label{uptype}

\begin{enumerate}
\item Free vertices in $N_i$ are the same in $H'_i$ and $\tilde{H}_i$.
\item If a vertex $v$ is reachable in $\tilde{H}_i$ from a free vertex in $N_i$ via an $N_i$-alternating path and ending at an alive vertex in $\tilde{H}_{i+1}$, then:
\begin{enumerate}
  \item $v$ is reachable in $\tilde{H}_j$ from a free vertex in $N_j$ via an $N_j$-alternating path ending at an alive vertex in $\tilde{H}_{j+1}$ for every $j < i$. 
 \item for every $j \leq i$ such that $v \in E(G'_j) \cup  O(G'_j)$ $v$ has the same type in $H'_j$ and $G'_j$, i.e., 
  $v \in E(G'_j) \Leftrightarrow  v \in E(H'_j)$ and  $v \in O(G'_j) \Leftrightarrow  v \in O(H'_j)$.
 
 \item for every $j < i$ such that $v \in U(G'_j)$, $v$ has the same type in $H'_j$ as in $H'_i$ if phases $i$ and $j$  are either both augmenting  or both non-augmenting 
and otherwise, $v \in O(H'_j) \Leftrightarrow v \in E(H'_i)$ and $v \in E(H'_j) \Leftrightarrow v \in O(H'_i)$.
\end{enumerate}

\end{enumerate}
\end{lemma}

\begin{proof}

We first prove $(1)$. From Theorem \ref{cor} point 3 we know that $N_i$ is a rank-maximal matching of $H_i$ and also a maximum matching of $H'_i$. Theorem \ref{cor} point 3 implies that $N_i$ is a maximum matching of $\tilde{H}_i$. Both $H'_i$ and $\tilde{H}_i$ have the same set of vertices, thus free vertices with respect to $N_i$ are the same in $H'_i$ and $\tilde{H}_i$ and $(1)$ holds.

$2(a)$. This part directly follows from Lemma \ref{hgDec}.

$2(b)$. Let $v \in E(G'_j) \cup O(G'_j)$. This assumption implies that $v \notin C$. From Lemma \ref{aux1} points \ref{aux1d} and \ref{aux2c} we know that $v$ has the same type in $G'_j$ and $\tilde{H}_j$. Additionally $v$ is reachable from a free vertex by an alternating path ending at an alive vertex in the graph $\tilde{H}_{j+1}$. By Lemma \ref{hgDec} $v \in E(\tilde{H}_j)$ if and only if $v \in E(H'_j)$. Similarly we have $v \in O(\tilde{H}_j)$ if and only if $v \in O(H'_j)$, hence $2(b)$ is proven.

$2(c)$. Let $v \in U(G'_j)$ for some $j < i$. There is a $N_i$-alternating path from a free vertex to an alive vertex in $\tilde{H}_{i+1}$ that contains $v$. 
Hence from part 2(a), there is a $N_j$-alternating path from a free vertex to $v$ in $\tilde{H}_j$, hence $v \in C$. 

Let $k$ be the phase during which $v$ is included to $C$. Since $C$ is augmented only in non-augmenting phases, $k$ must be non-augmenting.
After the addition of $v$ to $C$, there exists in $C$ an $M_k$-alternating path $T$ from $a_0$ to $v$. The length of $T$ is either odd
or even. We can notice that for each $j>k$, $T$ remains included in $C_j$, because we do not remove any edges from $C$, and it is an $M_j$-alternating path from $a_0$ to $v$, because $M_k \subseteq M_j$. Therefore, by Lemma \ref{aux1}, for each $j>k$, the type of $v$ in $\tilde{H}_j$ is the same as parity of $T$ iff $j$ is non-augmenting and otherwise (if $j$ is augmenting), its type in $\tilde{H}_j$ is opposite to the parity of $T$.
Also, the parity of $T$ is the same as $v$'s type in $\tilde{H}_k$.

By Lemma \ref{hgDec}, $v$ has the same type in $\tilde{H}_j$ and $H'_j$ for each $j < i$ and hence  $2(c)$ holds. \qed
\end{proof}

From Lemma \ref{uptype} we know that if $P'$ is a path in $\tilde{H}_i$ from a free vertex $u$ to a vertex $w \in Alive(i+1) \cup AV_i$, then we can determine types of all vertices of $P'$ in $H'_i$. Additionally from Lemma \ref{uptype} it is possible to determine types of such vertices in graphs $H'_j$ for each $j < i$. The above observations are a basis of Algorithm \ref{updatesalgorithm}. We start with $i=r$ and determine the set $Z_i$ of all vertices belonging to paths as described above (i.e. from a free vertex $u$ to a vertex $w \in Alive(i+1) \cup AV_i$).  Then we update the type of each vertex from $Z_i$ using Lemma \ref{uptype}, set $i \leftarrow i - 1$ and repeat the process for the new graph $\tilde{H}_i$. We continue iterating over $i$ until we reach $i = 0$. Note that if for some vertex $v$ we have $v \in Z_i$ and $v \notin Z_j$ for each $j > i$, then we have $v \in U(H'_j)$ for each $j > i$. Thus we can correctly determine types of all the vertices using this approach. 

Of course a naive implementation of the above idea does not achieve an $\mathcal{O}(m + \min(c'n, n^2))$ runtime.  Additional observations are needed. Let $v$ be any vertex. First we note that if $i$ is maximal such that $v$ belongs to $Z_i$ then types of $v$ in graphs $H'_1, H'_2, \ldots, H'_r$ can be correctly determined. There is no need to update the type of $v$ anymore even if it belongs to some $Z_j$ for $j < i$. Thus throughout the execution of the algorithm we maintain the set $Z$ of vertices for which we have already computed types and make sure to only update the types of vertices belonging to $Z_i \setminus Z$. In order to speed up the algorithm we need to show how to efficiently compute sets $Z_i \setminus Z$.

Let us first show how to find vertices belonging to $Z_i$. We first build an $N_i$-alternating forest of vertices reachable from the set $F_i$ of free vertices with respect to $N_i$. Then we determine the set $X$ of vertices belonging to $T_i$ and $Alive(i+1) \cup AV$. Next we consider a graph $(V(T_i), W_i)$ where $W_i$ is the set of edges with both endpoints in $T_i$. It is easy to see that all vertices reachable by alternating paths from $X$ in this graph form the set $Z_i$. Note that from Lemma \ref{uptype} it follows that $V(T_i) \subseteq V(T_j)$ for $j < i$, hence we do not have to build the alternating forest from scratch in each iteration. Instead for each $i$ we simply determine $T_i$ using the forest $T_{i+1}$. The set $Z_i \setminus Z$ can be determined similarly to the set $Z_i$. Instead of considering a graph $(V(T_i), W_i)$ we simply consider a graph $(V(T_i) \setminus Z, W_i)$ and claim that $Z_i \setminus Z$ is equal to the set of vertices reachable from $X$ in this graph. 

Computations of forests $T_i$ take $\mathcal{O}(m + min(nc', n^2))$ time in total. It is a straightforward consequence of the fact that $V(T_i) \subseteq V(T_{i-1})$. Similarly we can see that the time needed to compute all vertices reachable from $X$ in graphs $(V(T_i) \setminus Z, W_i)$ over the duration of the algorithm is also bounded by $\mathcal{O}(m + min(nc', n^2))$. It is a consequence of the fact that once a vertex $v$ is detected to be in $Z_i \setminus Z$ it is added to $Z$ and none of the edges incident to such a vertex is visited in any of the following iterations.

From the above discussion we obtain the correctness of the following lemma.

\begin{theorem}
\label{updatesruntimeproof}
Algorithm \ref{updatesalgorithm} computes $EG$-decompositions of graphs $H'_i$ in $\mathcal{O}(m + \min(c'n, n^2))$ time.
\end{theorem}

%\begin{proof}
%We first show that operations performed in line \ref{reachableLine} can be implemented in $O(m)$ time. Note that from the definition of sets $Z_i$ and $Z$ and operation in line \ref{ZLine} we can see that sets $Z_i$ computed by Algorithm \ref{updatesalgorithm} are pairwise disjoint. Thus in iteration $i$ of loop $for$ we run a bfs search from vertices of $F_i \setminus Z$ along $N_i$-alternating paths. Cost of this operation is proportional to the number of edges incident to $Z_i$ in $\hat{H}_i$. In total over iterations $1, 2, \ldots, r$ the cost is clearly $O(m)$.

%Note that in lines \ref{updateX3} and \ref{updateX5} types of vertices do not change at all.
%Recall that for each each vertex $u$ we remember the first iteration such that $u \in U(H'_i)$. Thus changes to types performed in lines \ref{updateX1}, \ref{updateX2} and \ref{updateX6} can be clearly done in constant time in each iteration. It remains to show that the line \ref{updateX4} costs us $O(m)$ in total. In this case we use the list from Lemma \ref{uptype} appropriately truncated, if needed.

%The remaining operations can be easily implemented in $O(m)$ time.
%\end{proof}

\begin{algorithm} [h]
\caption{for computing $EG$-decompositions of graphs $H'_i$ }
\label{updatesalgorithm}
\begin{algorithmic}[1]
\State $Z \leftarrow \emptyset$
\State $F \leftarrow \emptyset$
\State $N_{r+1} \leftarrow N_r$

\For {$i = r, r-1, \ldots, 1$}
   \State $N_i \leftarrow N_{i+1}  \setminus {\cal F}_{i+1}$
   \State $F_i \leftarrow$ free vertices with respect to $N_i$ \label{freeLine}
   \State $T_i \leftarrow$ an $N_i$-alternating forest in $\tilde{H}_i$ starting from vertices of $F_i$
   \State $W_i \leftarrow$ edges of $\tilde{H}_i$ with both endpoints in $T_i$
   \State $X \leftarrow$ vertices belonging to $T_i$ and $Alive(i+1) \cup AV$ 
   \State $Z_i \setminus Z \leftarrow$  all vertices reachable in $(V(T_i) \setminus Z, W_i)$ from $X$ via an $N_i$-alternating path \label{reachableLine}
   \ForAll {$v \in Z_i \setminus Z$}
	    \State for every $j>i$, $v \in U(H'_j)$ \label{updateX1}
			\State $v \in E(H'_i)$ (resp. $O(H'_i)$) if $v$ is reachable via an even (resp. odd) length $N_i$-alternating path \label{updateX2}
	    \If {$v \in E(G'_i) \cup O(G'_i)$} 
			   \State  for every $j \leq i$, $v$'s type in $H'_j$ is  the same  as in $G'_j$ \label{updateX3}
			\Else
			  \State for every $j <i$ such that $v \in U(G'_j)$, $v$'s type in $H'_j$ is  determined as in Lemma \ref{uptype} 2c \label{updateX4}
			\EndIf	
	 \EndFor
	\State $Z \leftarrow Z \cup Z_i$ \label{ZLine}
\EndFor
\ForAll {$v \in V \setminus Z$}
   \State $v \in U(H'_i)$ for every $i$ \label{updateX6}
\EndFor
\end{algorithmic}
\end{algorithm}

%\begin{comment}
In the first part of this section, we detected and removed the edges that are present in $\tilde{H}$ but not in $H'$. There may also be some edges that are present in ${H}'$ but not in $\tilde{H}$. For example, in Figure \ref{updatefig2}$(a)$ we notice that , $a \in O(\tilde{H}_3)$ and $p \in U(\tilde{H}_3)$, hence the edge $(a,p)$ doesn't belong to $AE$. Therefore by Line $4$ of Algorithm $\ref{main}$, the edge is not added to $\tilde{H}_3$. But if we consider $H'_3$ (Figure \ref{updatefig2}(b)), $(a,p)$ is an $UU$ edge and is not deleted from the graph. Therefore to complete updating the reduced graphs of $H$ we need to include $(a,p)$ to $H'_3$. 

From Theorem \ref{cor} point $1$, we know that every edge belonging to some rank-maximal matching of $H$ is present in the graph $\tilde{H}$. Suppose the edge $(a,p)\notin \tilde{H}$. Then the edge can't be matched in any rank-maximal matching of $H$. But if $(a,p)$ is present in $H'$, then the edge must be of type $UU$. We know that every $EO$ edge in $H'$ lies in an alternating path from a free vertex to an alive vertex. By Theorem $1$ of \cite{GhoshalNN14}, every edge of the path belongs to some rank-maximal matching.

As a last step of updating the reduced graphs $H'_1, H'_2, \ldots$ we deal with edges belonging to $H'_i \setminus \tilde{H}_i$.
All such edges are of type $UU$ in $H'_i$. For each vertex $v$, we find the minimum rank $i$ such that $v \in O(\tilde{H}_i) \cap U(H'_i)$, if it exists. Next we find every edge incident to $v$ such that the other endpoint belongs to $O(\tilde{H}_i) \cup U(\tilde{H}_i)$. This edge is of type $OO \cup OU$ and is removed from $\tilde{H}$. Finally, if both endpoints of the edge belong to $U(H'_i)$, then we add it to $H'_i$. Identification of such edges take $\mathcal{O}(m)$ time in total as we check every edge at most once.  For each vertex $v$, we can find the minimum rank $i$ with $v \in O(\tilde{H}_i) \cap U(H'_i)$ in constant time using the ordered list that we maintain to store the GE-decomposition of the vertex in every reduced graphs of $H$.

\begin{figure}
\centering
   \includegraphics[width=\textwidth]{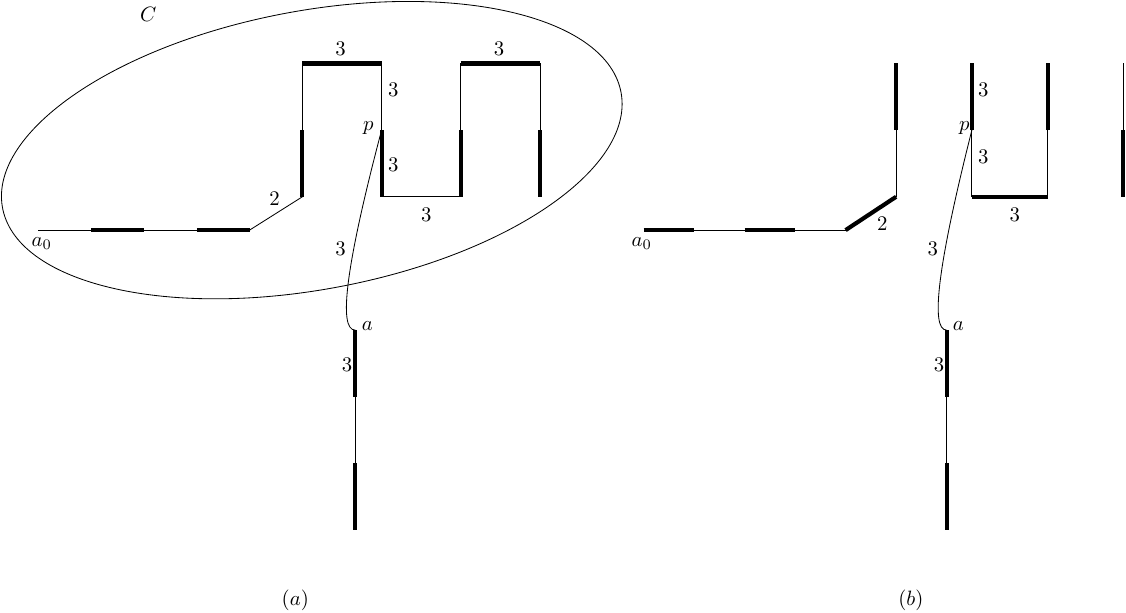}
	\caption{In Figure $(a)$, the edge $(a,p)$ is removed from the graph $\tilde{H}_3$, but in Figure $(b)$ the edge $(a,p)$ is a $UU$ edge in $H'_3$ and is not deleted from the graph}
	\label{updatefig2}
\end{figure}

%\end{comment}

\section{Remaining Update Operations} \label{remainingUpdates}

Let us remind that the algorithm for the dynamic version of the rank-maximal matching problem supports the following operations:

\begin{enumerate}
	\item Add a vertex $v$ along with incident edges to $G$
	\item Delete a vertex $v$ along with incident edges from $G$
	\item Add an edge $e$ to $G$
	\item Delete an edge $e$ from $G$
\end{enumerate}

We have already shown how to implement operation $(1)$. In this section we prove that $(2)-(4)$ can be essentially reduced to $(1)$. The following lemma is crucial for the reduction. 

\begin{lemma} \label{del-lemma}
Let $G$ be a an instance of the rank-maximal matching problem. Let $a_1$ and $p_1$ be two vertices of $G$ such that $(a_1, p_1)$ is matched in every rank-maximal matching of $G$. If $M$ is a rank-maximal matching of $G$ then $M \setminus \{(a_1, p_1)\}$ is a rank-maximal matching of the graph $G \setminus \{a_1, p_1\}$.
\end{lemma}
\begin{proof}
Let $M$ be a rank-maximal matching of $G$. Since $(a_1, p_1)$ is a matched edge in $G$, $M \setminus \{(a_1, p_1)\}$ is a matching of $G \setminus \{a_1, p_1\}$. Suppose $M \setminus \{(a_1, p_1)\}$ is not a rank-maximal matching of $G \setminus \{a_1, p_1\}$. We assume that $M'$ is a rank-maximal matching of $G \setminus \{a_1, p_1\}$. Hence $M'$ has a strictly better signature than $M \setminus \{(a_1, p_1)\}$ in $G \setminus \{a_1, p_1\}$. If we consider the matching $M' \cup \{(a_1, p_1)\}$, it is a matching of $G$ and it has a strictly better signature than $M$ in $G$. But $M$ is a rank-maximal matching of $G$. Therefore we arrive at a contradiction. Thus $M \setminus \{(a_1, p_1)\}$ is a rank-maximal matching of the graph $G \setminus \{a_1, p_1\}$.  \qed
\end{proof}

\subsection{Deletion of a Vertex}
In order to implement operation $(2)$ we modify the graph so that we can apply operation $(1)$ to delete a vertex from $G$. Let us assume that we want to delete an applicant $a_1$ from $G$. We introduce a dummy post $p_d$ and a rank `zero'. A post has rank `zero' in the preference list of an applicant if he prefers that post more than his rank $1$ post. We add $p_d$ as a rank `zero' post in the preference list of $a_1$. We define the new graph as $H$. Note that the graph $H_0$ contains only one edge $(a_1, p_d)$. We  calculate a rank-maximal matching of $H$ from a rank-maximal matching of $G$ with the aid of operation $(1)$. Let $M$ be a rank-maximal matching of $H$. It can easily be verified that $(a_1, p_d)$ is matched in every rank-maximal matching of $H$. By Lemma \ref{del-lemma}, the matching $M \setminus \{(a_1, p_d)\}$ is a rank-maximal matching of $H \setminus \{a_1, p_d\}$ and obviously also a rank-maximal matching of $G \setminus \{a_1\}$. In order to delete a post $p_1$ from $G$, we proceed analogously but this time we add a dummy applicant $a_d$ to the graph instead of a dummy post.

\subsection{Addition and Deletion of an Edge}
Both operations $(3)$ and $(4)$ can be implemented in a very similar way using operations $(1)$ and $(2)$. We first show how to implement operation $(3)$, i.e. how to add an edge to the graph. Let us assume that we wish to add an edge $(a, p)$ to the graph $G$. In order to do so we first use the operation $(2)$ to delete the vertex $a$ along with its incident edges from $G$. Next we simply use the operation $(1)$ to add the vertex $a$ again, but this time the incidence list of this vertex is larger by one, i.e. it contains all the edges incident to the old "copy" of $a$ along with the new edge $(a, p)$.

Operation $(4)$ can be implemented in an analogous way.
\section{Dynamic Popular Matching} \label{popular}

In this section, we give a simple reduction which allows us to use our dynamic rank-maximal matching algorithm to solve the dynamic popular matching problem. First we formally define popular matching. Let $G = (\mathcal{A} \cup \mathcal{P}, \mathcal{E})$ be a bipartite graph and $a \in \mathcal{A}$ be an applicant. For two matchings $M$ and $M'$ of $G$ we say that $a$ prefers $M$ to $M'$ if either $a$ is matched in $M$ and unmatched in $M'$, or $rank(a, M(a)) < rank(a, M'(a))$. 

\begin{definition}
A matching $M$ is said to be more popular than $M'$ if the number of applicants preferring $M'$ is no more than the number of applicants preferring $M$. A matching $M$ is said to be popular in $G$ if that matching is more popular than any other matching of $G$.
\end{definition}

As mentioned in \cite{AbrahamIKM07}, an unique last resort post $l(a)$ is added to each applicant $a$ as their least preferred post. For an applicant $a$, $f(a)$ denotes the set of rank $1$ posts adjacent to $a$. These posts are called $f$ posts. And $s(a)$ denotes the set of most preferred posts of $a$ belonging  to $E(G_1)$, where $G_1$ is the subgraph of $G$ containing rank $1$ edges. Abraham et al. \cite{AbrahamIKM07} proved the following theorem.

\begin{theorem} \label{popmain}
A matching $M$ is popular in $G$ iff
\begin{enumerate}
\item $M \cap \mathcal{E}_1$ is a maximum matching of $G_1 = (\mathcal{A} \cup \mathcal{P}, \mathcal{E}_1)$, 
\item for each agent $a$, $M(a) \in \{f(a) \cup s(a)\}$.
\end{enumerate}
\end{theorem}

Given an instance of popular matching $G$, we are going to introduce $G_{RMM}$, such that we can find a popular matching of $G$ by computing a rank-maximal matching in $G_{RMM}$. We define $G_{RMM} = (\mathcal{A} \cup \mathcal{P} , \mathcal{D}_1 \cup \mathcal{D}_2)$ where sets of vertices in $G$ and $G_{RMM}$ are identical and each edge has rank $i \in \{1,2\}$. Here as $\mathcal{D}_1$ we denote rank $1$ edges in $G$. Edges of $\mathcal{D}_1$ are rank $1$ in $G_{RMM}$. For each $a \in \mathcal{A}$, $\mathcal{D}_2$ contains the most preferred edges incident to $a$ that are also incident to a post belonging to $E(G_{RMM,1})$. We assign the ranks to the edges of $\mathcal{D}_2$ in the following way. If an edge is rank $1$ in $G$, then it is rank $1$ in $G_{RMM}$, otherwise we set the rank of this edge to $2$.

In order to compute a popular matching of $G$, we first calculate a rank-maximal matching of $G_{RMM}$. Next we check if the matching of $G_{RMM}$ is applicant complete or not. A matching is called an applicant complete matching if it matches every applicant present in the graph. If the matching is an applicant complete matching, then we claim that it is also a popular matching of $G$, otherwise no popular matching exists in $G$.  

\begin{lemma}
Let $G$ be a bipartite graph and we calculate $G_{RMM} = (\mathcal{A} \cup \mathcal{P} , \mathcal{D}_1 \cup \mathcal{D}_2)$ from $G$ by the reduction described above. If we can compute an applicant complete rank-maximal matching of $G_{RMM}$, then that is also a popular matching of $G$. Otherwise $G$ does not contain a popular matching.
\end{lemma}

\begin{proof}
First, we can notice that the graphs $G_1$ and $G_{RMM,1}$ are identical. After the first iteration of rank-maximal matching on $G_{RMM}$, we compute a maximum matching of $G_{RMM,1}$, which satisfies the first condition of Theorem \ref{popmain}. To prove the second condition, we observe the edges present in $G_{RMM}$. The edges belonging to $\mathcal{D}_1$ are between an applicant and his $f$ posts.  Since we have shown that $G_1$ and $G_{RMM,1}$ are identical, the Gallai Edmonds decomposition of $G_1$ and $G_{RMM,1}$ are the same. Thus, $\mathcal{D}_2$ contains the edges between each applicant and his $s$ post. Therefore, for each $a$, $G_{RMM}$ contains the edges incident to $\{f(a), s(a)\}$. Therefore any applicant complete rank-maximal matching satisfies the second condition of Theorem \ref{popmain}. Thus a rank-maximal matching of $G_{RMM}$ is indeed a popular matching of $G$. \qed
\end{proof}

It is easy to observe that $G_{RMM}$ is the same graph as the reduced graph that we get during the combinatorial algorithm \cite{AbrahamIKM07} for popular matching. Only the ranks of the edges may not be the same in these two graphs. In the next algorithm, we give a pseudocode for dynamic popular matching with the help of the algorithm for dynamic rank-maximal matching.

\begin{algorithm} 
\caption{Dynamic Popular Matching}
\label{mainpop}
\begin{algorithmic}[1]
\State Construct the graph $G' = (\mathcal{A} \cup \mathcal{P}, \mathcal{E}
)$, where $\mathcal{E} = \{(a, p) | p \in f(a) \cup s(a), a \in \mathcal{A}\}$
\State Compute the graph $G_{RMM}$ by reassigning the ranks of the edges of $G'$
\State Let $M$ be a popular matching of $G$ and a rank-maximal matching of $G_{RMM}$
\State $H = G \cup \{a\}$
\State $H_{RMM} = G_{RMM} \cup \{a\}$, where $a$ is  isolated 
\State Add the edges corresponding to $f$ posts of $a$ to $H_{RMM}$
\State Perform the first iteration of Algorithm \ref{main} and update the Gallai-Edmonds decomposition of the graph $H_{RMM,1}$
\State Update  $s$ posts of each applicant and add the edges corresponding to newly found $s$ posts to $\mathcal{D}_2$ 
\State Assign appropriate ranks to the newly added edges of $\mathcal{D}_2$
\State Perform the second iteration of Algorithm \ref{main} on $H_{RMM}$ and update the Gallai Edmonds decomposition
\If {the rank-maximal matching of $H_{RMM}$ is an applicant complete rank-maximal matching}
	\State The rank-maximal matching of $H_{RMM}$ is a popular matching of $H$
\Else
	\State $H$ does not have a popular matching\EndIf
\end{algorithmic}
\end{algorithm}

As we can see above it may happen that at some point during the execution of the dynamic algorithm no popular matching exists. It is important to emphasise that our algorithm maintains a rank-maximal matching of $G_{RMM}$ regardless of whether a popular matching exists or not. Then the existence of a popular matching can be easily checked based on the rank-maximal matching of $G_{RMM}$. In particular it may happen that at some point as a result of an update operation a previously unsolvable instance $G$ becomes solvable. Note that in this case we do not have to compute a popular matching from scratch as we already have a precomputed rank-maximal matching of $G_{RMM}$ which is not applicant complete. We simply update such a matching and obtain an applicant complete matching of $G_{RMM}$ which is popular in $G$. 
\newpage
\bibliography{rmm_lipics}
\bibliographystyle{plain}

\end{document}